  \let\setminus\smallsetminus% no U+29F5 in TG Pagella Math
\DeclareMathOperator{\psdrk}{rank_{PSD}} % PSD rank
\newcommand{\R}{\mathbb{R}}
\newcommand{\N}{\mathbb{N}}
\newcommand{\Z}{\mathbb{Z}}
\newcommand{\cube}[1]{[0,1]^{#1}}
\DeclareMathOperator{\xcs}{xc_{\text{SDP}}}
\newcommand{\face}[1]{\left\{#1\right\}}
\newcommand{\lspan}[1]{\mathrm{span}\left\{#1\right\}}
\newcommand{\vol}[1]{\operatorname{vol}\left(#1\right)}
\newcommand{\conv}[1]{\operatorname{conv}\left(#1\right)}
\newcommand {\card}[1]{\left|#1\right|}
\newcommand{\opnorm}[1]{\left\|#1\right\|}
\newcommand{\norm}[1]{\left\|#1\right\|_2}
\newcommand{\fnorm}[1]{\left\|#1\right\|_F}
\newcommand{\mnorm}[1]{\left\|#1\right\|_\infty}
\newcommand{\lmax}[1]{\left\|#1\right\|_\infty}
\newcommand{\pr}[2]{\langle{#1, #2}\rangle}
\newcommand{\binSet}{\{0,1\}}
\newcommand{\psd}{\mathbb S_+}
\newcommand{\tr}[1]{\mathrm{Tr}\left[#1\right]}
\newcommand{\trans}{\mathsf T}
\newcommand{\beq}{\begin{equation}}
\newcommand{\eeq}{\end{equation}}
\newcommand{\beqn}{\begin{equation*}}
\newcommand{\eeqn}{\end{equation*}}
\newcommand{\beqr}{\begin{eqnarray}}
\newcommand{\eeqr}{\end{eqnarray}}
\newcommand{\beqrn}{\begin{eqnarray*}}
\newcommand{\eeqrn}{\end{eqnarray*}}
\newcommand{\eps}{\varepsilon}
\newcommand{\sphere}[1]{S^{#1 - 1}}
\newcommand{\st}{\big|\,}
\DeclareMathOperator{\Tr}{Tr}
\renewenvironment{proof}[1][]{
  \begin{trivlist}
   \item[\hspace{\labelsep}{\em\noindent Proof#1:\/}]}
   {{\hfill$\Box$}
  \end{trivlist}}
\newenvironment{claimproof}[1][]{
  \begin{trivlist}
   \item[\hspace{\labelsep}{\sc\noindent Proof of claim#1:\/}]}
   {{\hfill$\blacklozenge$}
  \end{trivlist}
}
\newtheorem{defn}[theorem]{Definition}
\title{On the existence of 0/1 polytopes with high \\ semidefinite extension complexity%
  \thanks{\emph{Keywords and phrases:}
    semidefinite extended formulations, extended formulations,
    extension complexity} % space to separate footnote marks
}
\author{Jop Bri\"et \inst{1}\footnote{J.B. was supported by a Rubicon grant from the Netherlands Organisation for Scientific Research (NWO).} \and Daniel Dadush\inst{1} \and Sebastian
  Pokutta\inst{2}\footnote{Research reported in this paper was partially supported by NSF grant CMMI-1300144.}}
\institute{
New York  University,
 Courant Institute of Mathematical Sciences,
New York, NY, USA.
 \email{\{jop.briet,dadush\}@cims.nyu.edu}
\and
Georgia Institute of Technology,
  H. Milton Stewart School of Industrial and
Systems Engineering,
  Atlanta, GA,
  USA.
  \email{sebastian.pokutta@isye.gatech.edu}}
\begin{document}

\maketitle

\begin{abstract}
In \cite{Rothvoss11} it was shown that there exists a 0/1 polytope (a polytope whose vertices are in~$\{0,1\}^{n}$) such that
any higher-dimensional polytope projecting to it must have $2^{\Omega(n)}$ facets, i.e., its linear extension complexity is exponential. The
question whether there exists a 0/1 polytope with high PSD extension
complexity was left open. 
We answer this question in the
affirmative by showing that there is a 0/1 polytope such that any
spectrahedron projecting to it must be the intersection of a semidefinite cone of dimension~$2^{\Omega(n)}$ and an affine space.
Our proof relies on a new technique to rescale semidefinite factorizations.
\end{abstract}

\section{Introduction}
\label{sec:introduction}
The subject of lower bounds on the size of extended formulations has recently
regained a lot of attention. This is due to several reasons. First of all,
essentially all NP-Hard problems in combinatorial optimization can be expressed
as linear optimization over an appropriate convex hull of integer points.
Indeed, many past (erroneous) approaches for proving that P=NP have proceeded
by attempting to give polynomial sized linear extended formulations for hard
convex hulls (convex hull of TSP tours, indicators of cuts in a graph, etc.).
Recent breakthroughs~\cite{extform4, bfps2012} have unconditionally ruled out
such approaches for the TSP and Correlation polytope, complementing the classic
result of \cite{Yannakakis91} which gave lower bounds for symmetric
extended formulations. Furthermore, even for polytopes over which optimization
is in P, it is very natural to ask what the ``optimal'' representation of the
polytope is. From this perspective, the smallest extended formulation
represents the ``description complexity'' of the polytope in terms of a linear
or semidefinite program.

A \emph{(linear) extension} of a polytope \(P \subseteq \R^n\) is another
polytope \(Q \subseteq \R^d\), so that there exists a linear projection \(\pi\)
with \(\pi(Q) = P\). The \emph{extension complexity} of a polytope is the
minimum number of facets in any of its extensions. The linear extension
complexity of \(P\) can be thought of as the inherent complexity of expressing
\(P\) with linear inequalities. Note that in many cases it is possible to save
an exponential number of inequalities by writing the polytope in higher-dimensional space. Well-known examples include the regular polygon, see
\cite{Ben-TalNemirovski01} and \cite{FioriniRothvossTiwary11} or the
permutahedron, see \cite{Goemans09}. A \emph{(linear) extended formulation} is
simply a normalized way of expressing an extension as an intersection of the
nonnegative cone with an affine space; in fact we will use these notions in an
interchangeable fashion. In the seminal work of \cite{Yannakakis88} a
fundamental link between the extension complexity of a polytope and the
nonnegative rank of an associated matrix, the so called \emph{slack matrix},
was established and it is precisely this link that provided all known strong
lower bounds. It states that the nonnegative rank of any slack matrix is equal
to the extension complexity of the polytope.

As shown in \cite{extform4} and \cite{GouveiaParriloThomas2011} the above
readily generalizes to semidefinite extended formulations. Let \(P \subseteq
\R^n\) be a polytope. Then a \emph{semidefinite extension} of \(P\) is a
spectrahedron \(Q \subseteq \R^d\) so that there exists a linear map \(\pi\)
with \(\pi(Q) = P\). While the projection of a
polyhedron is polyhedral, it is open which convex sets can be obtained as
projections of spectrahedra. We can again normalize the representation by considering
\(Q\) as the intersection of an affine space with the cone of positive semidefinite (PSD) matrices. The
\emph{semidefinite extension complexity} is then defined as the
smallest~\(r\) for which there exists an affine space such that its intersection
with the cone of \(r \times r\) PSD matrices projects to \(P\). We
thus ask for the smallest representation of \(P\) as a projection of a
spectrahedron. In both the linear and the semidefinite case, one can
think of the extension complexity as the  minimum size of the cone needed to
represent \(P\). Yannakakis's theorem can be generalized to this
case, as was done in \cite{extform4} and \cite{GouveiaParriloThomas2011}, and it
asserts that the semidefinite extension complexity of a polytope is equal to
the semidefinite rank (see Definition~\ref{def:sdpfact}) of any of its slack matrices. 

An important fact in the study of extended formulations is that the encoding
length of the coefficients is disregarded, i.e., we only measure the dimension
of the required cone. Furthermore, a lower bound on the extension complexity of a
polytope does not imply that building a separation oracle for the polytope is
computationally hard. Indeed, as recently shown in \cite{Rothvoss13},
the perfect matching polytope has exponential extension complexity,
while the associated separation problem (which
allows us to compute min-cost perfect matchings) is in P. Thus standard
complexity theoretic assumptions and limitations do not apply. In fact one of
the main features of extended formulations is that they \emph{unconditionally}
provide lower bounds for the size of linear and semidefinite programs
\emph{independent of P vs.~NP}. 

The first natural class of polytopes with high linear extension complexity comes
from the work of~\cite{Rothvoss11}. Rothvo\ss\  showed that ``random'' 0/1
polytopes have exponential linear extension complexity via an elegant
counting argument. Given that SDP relaxations are often far more powerful than
LP relaxations, an important open question is whether random 0/1 polytopes also
have high PSD extension complexity.

\subsection{Related work}
\label{sec:related-work} The basis for the study of linear and semidefinite
extended formulations is the work of Yannakakis (see \cite{Yannakakis88} and
\cite{Yannakakis91}). The existence of a 0/1 polytope with exponential
extension complexity was shown in \cite{Rothvoss11} which in turn was inspired
by \cite{Shannon49}. The first explicit example, answering a long standing open
problem of Yannakakis, was provided in \cite{extform4} which, together with
\cite{GouveiaParriloThomas2011}, also lay the foundation for the study of
extended formulations over general closed convex cones. In \cite{extform4} it
was also shown that there exist matrices with large nonnegative rank but small
semidefinite rank, indicating that semidefinite extended formulations can be
exponentially stronger than linear ones, however falling short of giving an
explicit proof. 
They thereby separated
the expressive power of linear programs from those of semidefinite programs and
raised the question:
\begin{center}
{\em Does {\em every} 0/1 polytope have an efficient semidefinite lift?}
\end{center}

Other related work includes \cite{bfps2012}, where the authors study
approximate extended formulations and provide examples of spectrahedra
that cannot be approximated well by linear programs with a polynomial
number of inequalities as well as improvements thereof by \cite{braverman2012information}.
\cite{FaenzaFioriniGrappeTiwary11} proved equivalence of extended formulations to communication complexity. Recently there has been also significant progress in terms of
lower bounding the linear extension complexity of polytopes by means of
information theory, see \cite{braverman2012information} and
\cite{BP2013commInfo}. Similar techniques are not known for the semidefinite case.

\subsection{Contribution}
\label{sec:contribution}

We answer the above question in the negative, i.e., we show the existence of a
0/1 polytope with exponential semidefinite extension complexity. In particular,
we show that the counting argument of \cite{Rothvoss11} extends to the PSD
setting. 

The main challenge when moving to the PSD setting, is that the
largest value occurring in the slack matrix does not easily translate to a
bound on the largest values occurring in the factorizations. Obtaining such a
bound is crucial for the counting argument to carry over.

Our main technical contribution is a new rescaling technique for semidefinite
factorizations of slack matrices. In particular, we show that any rank-$r$ semidefinite
factorization of a slack matrix with maximum entry size~$\Delta$ can be ``rescaled'' to a
semidefinite factorization where each factor has operator norm at most~$\sqrt{r\Delta}$
(see Theorem~\ref{thm:psdRescale}). Here our proof proceeds by a variational argument and
relies on John's theorem on ellipsoidal approximation of convex bodies
\cite{john1948extremum}. We note that in the linear case proving such a result is far
simpler, here the only required observation is that after independent nonnegative
scalings of the coordinates a nonnegative vector remains nonnegative. However, one
cannot in general independently scale the entries of a PSD matrix while maintaining the
PSD property.  

Using our rescaling lemma, the existence proof of the 0/1 polytopes with high semidefinite
extension complexity follows in a similar fashion to the linear case as presented in
\cite{Rothvoss11}.  In addition to our main result, we show the existence of a polygon
with \(d\) integral vertices and semidefinite extension complexity
\(\Omega((\frac{d}{\log d})^{\frac{1}{4}})\). The argument follows similarly to
\cite{FioriniRothvossTiwary11} adapting \cite{Rothvoss11}. 

\subsection{Outline}
\label{sec:outline}

In Section~\ref{sec:preliminaries} we provide basic results and notions. We then present
the rescaling technique in Section~\ref{sec:resc-psd-extend} which is at the core of our
existence proof. In Section~\ref{sec:exist-01-polyt} we establish the existence of 0/1
polytopes with subexponential semidefinite extension complexity and we conclude with some
final remarks in Section~\ref{sec:conclusion}. 

\section{Preliminaries}
\label{sec:preliminaries}

Let \([n] \coloneqq \face{1, \dots, n}\). In the following we will consider semidefinite extended formulations. We refer the interested reader to
\cite{extform4} and \cite{bfps2012} for a broader overview and proofs.  

Let $B_2^n \subseteq \R^n$ denote the $n$-dimensional Euclidean ball, and let $\sphere{n} = \partial B_2^n$ denote the Euclidean sphere in $\R^n$. We
denote by $\psd^n$  the set of $n \times n$ PSD matrices which form a (non-polyhedral) convex cone. Note that~$M \in \psd^n$ if and only if $M$ is
symmetric ($M^\trans = M$) and
\[
x^\trans M x \geq 0 \quad \forall x \in \R^n \text{.}
\]
Equivalently, $M \in \psd^n$ iff $M$ is symmetric and has nonnegative eigenvalues. For a linear subspace $W \subseteq \R^n$, let $\dim(W)$ denote its
dimension, $W^\perp$ its orthogonal complement, and $P_W:\R^n \rightarrow \R^n$ the orthogonal projection onto $W$. Note that as a matrix $P_W \in
\psd^n$ and $P_W^2 = P_W$. For a matrix $A \in \R^{n \times m}$, let ${\rm Im}(A)$ denote its image or column span, and let ${\rm Ker}(A)$ denote its
kernel. For a symmetric matrix $A \in \R^{n \times n}$, we have that ${\rm Im}(A) = {\rm Ker}(A)^\perp$. If $A \in \psd^n$, we have that $x \in {\rm
Ker}(A) \Leftrightarrow x^\trans A x = 0$. We define the pseudo-inverse $A^+$ of a symmetric matrix $A$ to be the unique matrix satisfying $A^+A =
AA^+ = P_W$, where $W = {\rm Im}(A)$. If $A$ has spectral decomposition $A = \sum_{i=1}^k \lambda_i v_iv_i^\trans$, $v_1,\dots,v_k$
orthonormal, then $A^+ = \sum_{i=1}^k \lambda_i^{-1} v_iv_i^\trans$.

For matrices $A,B \in \psd^n$, we have that ${\rm Im}(A + B) = {\rm Im}(A) + {\rm Im}(B)$ and that ${\rm Ker}(A+B) = {\rm Ker}(A) \cap {\rm Ker}(B)$.
We denote the trace of $A \in \psd^n$ by ${\rm Tr}[A] = \sum_{i=1}^n A_{ii}$.  For a pair of equally-sized matrices~$A,B$ we
let~$\langle A,B\rangle = {\mathrm{Tr}}[A^{\trans}B]$ denote their trace inner product and let~$\fnorm{A} = \sqrt{\langle A,A\rangle}$ denote the
Frobenius norm of~$A$. We denote the operator norm of a matrix \(M\in\R^{m\times n}\) by 
\[
\opnorm{M} = \sup_{\norm{x}=1} \norm{Mx} \text{.}
\]
If $M$ is square and symmetric ($M^{\trans} = M$), then $\opnorm{M} = \sup_{\norm{x}=1} |x^\trans Mx|$, in which case $\opnorm{M}$ denotes the largest
eigenvalue of $M$ in absolute value. Lastly, if $M \in \psd^n$ then $\opnorm{M} = \sup_{\norm{x}=1} x^\trans Mx$ by nonnegativity of the inner
expression.

For every positive integer~$\ell$ and any $\ell$-tuple of matrices ${\mathbf M} = (M_1,\dots,M_\ell)$ we define
\beqn
\lmax{\mathbf M} = \max\{\opnorm{M_i}\st i\in[\ell]\}.
\eeqn

\begin{defn}[Semidefinite extended formulation]
Let $K\subseteq \R^n$ be a convex set. A {\em semidefinite extended
  formulation} (semidefinite EF)  of~$K$ is a system consisting of a positive integer~$r$, an index set~$I$ and a set of triples $(a_i,U_i,b_i)_{i\in I}\subseteq \R^n\times \psd^r\times \R$ such that
\beqn
K = \{x\in\R^n\st \exists Y\in \psd^r:\: a_i^{\trans}x + \langle U_i,Y\rangle = b_i\: \forall i\in I\}.
\eeqn
The {\em size} of a semidefinite EF is the size~$r$ of the positive semidefinite matrices~$U_i$.
The {\em semidefinite extension complexity} of~$K$, denoted~$\xcs(K)$,
is the minimum size of a semidefinite EF of~$K$.
\end{defn}

In order to characterize the semidefinite extension complexity of a
polytope $P\subseteq[0,1]^n$ we will need the concept of a slack
matrix.

\begin{defn}[Slack matrix]\label{def:slack}
Let $P\subseteq \cube{n}$ be a polytope, $I,J$ be finite sets, $\mathcal A = (a_i,b_i)_{i\in I}\subseteq \R^n\times \R$ be a set of pairs and let~$\mathcal X = (x_j)_{j\in J}\subseteq\R^n$ be a set of points, such that
\beqn
P = \{x\in\R^n\st a_i^{\trans}x\leq b_i\: \forall i\in I\} = \conv{\mathcal X}.
\eeqn
Then, the {\em slack matrix} of~$P$ associated with~$(\mathcal
A,\mathcal X)$ is given by $S_{ij} = b_i - a_i^{\trans}x_j$.
\end{defn}

Finally, the definition of a semidefinite factorization is as follows.

\begin{defn}[Semidefinite factorization]\label{def:sdpfact}
Let $I, J$ be finite sets,  $S\in\R_+^{I\times J}$ be a nonnegative matrix and~$r$ be a positive integer.
Then, a {\em rank-$r$ semidefinite factorization} of~$S$ is a set of pairs~$(U_i,V^j)_{(i,j)\in I\times J}\subseteq \psd^r\times\psd^r$ such that $$S_{ij} = \langle U_i,V^j\rangle$$ for every~$(i,j)\in I\times J$.
The {\em semidefinite rank} of~$S$, denoted~$\psdrk(S)$, is the minimum~$r$ such that there exists a rank~$r$ semidefinite factorization of~$S$.
\end{defn}

Using the above notions the semidefinite extension complexity of a
polytope can be characterized by the semidefinite rank of any of its slack
matrices, which is a generalization of Yannakakis's factorization
theorem (\cite{Yannakakis88} and \cite{Yannakakis91}) 
established in \cite{extform4} and
\cite{GouveiaParriloThomas2011}.

\begin{theorem}[Yannakakis's Factorization Theorem for SDPs]
\label{thm:yannaSDP}
Let $P\subseteq\cube{n}$ be a polytope and $\mathcal A =
(a_i,b_i)_{i\in I}$ and $\mathcal X = (x_j)_{j\in J}$ be as in Definition~\ref{def:slack}. 
Let $S$ be the slack matrix of~$P$ associated with~$(\mathcal A,\mathcal X)$.
Then, $S$ has a rank-$r$ semidefinite factorization if and only if~$P$
has a semidefinite EF of size~$r$. That is, $\psdrk(S) = \xcs(P)$.

Moreover, if $(U_i,V^j)_{(i,j)\in I\times J}\subseteq \psd^r\times\psd^r$ is a factorization of~$S$, then
\beqn
P = \{x\in\R^n\st \exists Y\in\psd^r:\: a_i^{\trans}x + \langle U_i,Y\rangle = b_i\: \forall i\in I\}
\eeqn
and the pairs~$(x_j,V^j)_{j\in J}$ satisfy~$a_i^{\trans}x_j + \langle U_i,V^j\rangle = b_i$ for every~$i\in I$.

In particular, the extension complexity is independent of the choice
of the slack matrix and the semidefinite rank of all slack matrices of \(P\) is
identical. 
\end{theorem}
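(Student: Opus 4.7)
\noindent The plan is to prove the two inequalities $\psdrk(S)\le\xcs(P)$ and $\xcs(P)\le\psdrk(S)$ separately, following the standard Yannakakis template but carrying the construction through for positive semidefinite matrices. The easier direction is the second, so I would start there as it also establishes the ``moreover'' claim constructively.

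\emph{From factorization to extension.} Assume a rank-$r$ PSD factorization $(U_i,V^j)_{(i,j)\in I\times J}$ of $S$ is given, and define
\[
Q \;=\; \set{(x,Y)\in\R^n\times\psd^r}{a_i^\trans x + \langle U_i,Y\rangle = b_i\ \forall\,i\in I}.
\]
I would show $P=\pi_x(Q)$ as follows. For the inclusion $P\subseteq\pi_x(Q)$, write an arbitrary $x\in P=\conv(\mathcal X)$ as $x=\sum_j\lambda_j x_j$ with $\lambda_j\ge 0$ and $\sum_j\lambda_j=1$, set $Y=\sum_j\lambda_j V^j\in\psd^r$, and verify using the factorization identity and $\sum_j\lambda_j=1$ that $\langle U_i,Y\rangle = \sum_j\lambda_j S_{ij} = b_i - a_i^\trans x$ for every $i$. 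This simultaneously establishes that the specific pairs $(x_j,V^j)$ satisfy the equations, giving the ``moreover'' statement. For the reverse inclusion $\pi_x(Q)\subseteq P$, use that $U_i,Y\in\psd^r$ implies $\langle U_i,Y\rangle\ge 0$, so $a_i^\trans x = b_i - \langle U_i,Y\rangle \le b_i$ for all $i\in I$, hence $x\in P$. This yields $\xcs(P)\le \psdrk(S)$.

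\emph{From extension to factorization.} Suppose $P$ admits a semidefinite EF $(a'_k,U'_k,b'_k)_{k\in K}$ of size $r$, and let $S$ be the slack matrix associated with $(\mathcal A,\mathcal X)$. For each vertex $x_j\in\mathcal X$, pick a witness $V^j\in\psd^r$ satisfying the equations of the EF. To produce the matrices $U_i$, I would argue that since each inequality $a_i^\trans x \le b_i$ is valid on $P$, the linear function $(x,Y)\mapsto b_i - a_i^\trans x$ is nonnegative on the spectrahedral lift defined by the EF. Applying SDP duality (conic Farkas) to this nonnegativity certificate yields real multipliers $\mu_k$ and a matrix $U_i\in\psd^r$ with the identity
\[
b_i - a_i^\trans x \;=\; \langle U_i,Y\rangle \;+\; \sum_{k\in K}\mu_k\bigl(b'_k - a'_k{}^\trans x - \langle U'_k,Y\rangle\bigr)
\]
holding for all $(x,Y)\in\R^n\times\psd^r$ (wherever it is needed). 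Evaluating at $(x_j,V^j)$ makes the parenthesised terms vanish and gives $\langle U_i,V^j\rangle = b_i - a_i^\trans x_j = S_{ij}$, which is the desired factorization. Hence $\psdrk(S)\le \xcs(P)$. The main obstacle is this step: one must be careful to invoke conic duality in a form that produces a PSD multiplier $U_i$ rather than merely an asymptotic certificate; in the closed polyhedral/compact setting considered here this is standard, but it is the only place where the proof goes beyond pure bookkeeping.

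\emph{Independence of the slack matrix.} Since $\xcs(P)$ is defined intrinsically from $P$ and the two inequalities above apply to every choice of $(\mathcal A,\mathcal X)$, the semidefinite rank $\psdrk(S)$ is independent of the specific slack matrix chosen and equals $\xcs(P)$, completing the proof.
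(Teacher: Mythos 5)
The paper does not actually prove Theorem~\ref{thm:yannaSDP}; it imports it from \cite{extform4} and \cite{GouveiaParriloThomas2011}, so the comparison below is with the standard proofs in those references, which your outline follows. Your first direction (factorization $\Rightarrow$ extended formulation) and the ``moreover'' clause are correct and complete: taking $Y=\sum_j\lambda_j V^j$ for a convex combination, and using $\langle U_i,Y\rangle\ge 0$ for the reverse inclusion, is exactly the standard argument. The closing remark on independence of the slack matrix is also fine.

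The genuine gap is in the direction ``extension $\Rightarrow$ factorization,'' at precisely the step you flag but then wave away. You need an \emph{exact} conic Farkas certificate: from the nonnegativity of $(x,Y)\mapsto b_i-a_i^{\trans}x$ on the spectrahedron $Q=\{(x,Y):Y\in\psd^r,\ a_k'^{\trans}x+\langle U_k',Y\rangle=b_k'\}$ you want $U_i\in\psd^r$ and multipliers $\mu_k$ making the identity hold as an identity of affine functionals. For the PSD cone this is \emph{not} automatic: strong conic duality and attainment can fail without a constraint qualification, and your justification (``closed polyhedral/compact setting'') does not apply --- $\psd^r$ is not polyhedral, and $Q$ need not be compact even though its projection $P$ is a polytope (its recession cone may be nontrivial but project to $0$). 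The standard repair, which is why \cite{GouveiaParriloThomas2011} speak of \emph{proper} lifts, is to first replace $\psd^r$ by the minimal face of $\psd^r$ containing $Q$; every face of $\psd^r$ is of the form $\{X\in\psd^r:\mathrm{Im}(X)\subseteq W\}$ and is linearly isomorphic to $\psd^{r'}$ with $r'\le r$. Relative to that face the affine constraints meet the relative interior, Slater's condition holds, and the exact dual certificate $U_i$ exists (in the smaller cone, which one then pads with zeros to size $r$). Without this reduction --- or some equivalent closedness argument --- the key identity you evaluate at $(x_j,V^j)$ is not established, so as written the inequality $\psdrk(S)\le\xcs(P)$ is not proved.
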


The following well-known theorem due to \cite{john1948extremum} lies at the core of our
rescaling argument. We state a version that is suitable for the later application.  
Recall that \(B_2^n\) denotes the \(n\)-dimensional Euclidean unit ball.
A {\em probability vector} is a vector $p\in\R^n_+$ such that $p(1) + p(2) + \cdots + p(n) = 1$.
For a convex set $K \subseteq \R^n$, we let ${\rm aff}(K)$ denote the affine hull of $K$,
the smallest affine space containing $K$. We let $\dim(K)$ denote the linear dimension of
the affine hull of $K$. Last, we let ${\rm relbd}(K)$ denote the relative boundary of $K$,
i.e., the topological boundary of $K$ with respect to its affine hull ${\rm aff}(K)$.

\begin{theorem}[\cite{john1948extremum}]\label{thm:John}
Let \(K \subseteq \R^n\) be a centrally symmetric convex set with $\dim(K)=k$. Let \(T \in
\R^{n \times k}\) be such that \(E = T \cdot B_2^k = \{Tx\st \|x\| \leq 1\}\) is the
smallest volume ellipsoid containing \(K\). Then, there exist a finite set of
points~$\mathcal Z\subseteq {\rm relbd}(K) \cap {\rm relbd}(E)$ and a probability vector
$p\in\R^\mathcal Z_+$ such that 
\[\sum_{z\in\mathcal Z} p(z)\, zz^\trans  = \frac{1}{k} TT^\trans.\]
\end{theorem}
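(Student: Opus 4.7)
The plan is to reduce the stated version to the classical John's theorem for centrally symmetric convex bodies in~$\R^k$ whose minimum volume enclosing ellipsoid is the Euclidean unit ball, by pulling $K$ back along the linear map $T$.

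First I would observe that since $K$ is centrally symmetric with $\dim(K)=k$, its affine hull is the linear subspace $W = {\rm Im}(T) \subseteq \R^n$, and the minimum volume enclosing ellipsoid $E = T B_2^k$ lives in the same subspace. The matrix $T \in \R^{n \times k}$ therefore has full column rank, and the pseudo-inverse $T^+$ restricted to $W$ is a linear bijection between $W$ and $\R^k$ that sends $E$ to $B_2^k$ and $\text{relbd}(E)$ to $S^{k-1}$. Set $K' = T^+ K \subseteq \R^k$. Then $K'$ is centrally symmetric, full-dimensional in $\R^k$, and has $B_2^k$ as its smallest volume enclosing ellipsoid, with $T^+(\text{relbd}(K)) = \text{bd}(K')$.

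Next I would invoke the classical John's theorem in its symmetric dual form: if $B_2^k$ is the minimum volume enclosing ellipsoid of a centrally symmetric convex body $K' \subseteq \R^k$, then there exist contact points $u_1,\ldots,u_m \in \text{bd}(K') \cap S^{k-1}$ and positive weights $\lambda_1,\ldots,\lambda_m$ satisfying
\[
\sum_{i=1}^m \lambda_i\, u_i u_i^\trans = I_k.
\]
Taking the trace on both sides and using $\|u_i\|=1$ gives $\sum_i \lambda_i = k$, so $p_i := \lambda_i/k$ defines a probability vector, and the identity rescales to $\sum_i p_i\, u_i u_i^\trans = \frac{1}{k} I_k$.

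Finally I would push the decomposition forward through $T$. Set $z_i = T u_i$ and define the probability vector $p$ on $\mathcal Z = \{z_1,\ldots,z_m\}$ by $p(z_i)=p_i$ (collapsing repetitions if needed). Since $T$ bijectively identifies $S^{k-1}$ with $\text{relbd}(E)$ and $\text{bd}(K')$ with $\text{relbd}(K)$, each $z_i$ lies in $\text{relbd}(K) \cap \text{relbd}(E)$. Moreover,
\[
\sum_{i} p_i\, z_i z_i^\trans \;=\; T\!\left(\sum_{i} p_i\, u_i u_i^\trans\right)\! T^\trans \;=\; T\cdot\tfrac{1}{k} I_k \cdot T^\trans \;=\; \tfrac{1}{k}\, T T^\trans,
\]
which is the desired identity. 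The only conceptual step that needs care is the reduction to $\R^k$ and verifying that $T$ correctly transports boundaries and ellipsoids; everything else is packaged inside the classical John theorem, which is being assumed as input.
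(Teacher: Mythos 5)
Your proposal is correct. The paper does not prove this statement; it cites John's original paper and records the theorem as a known result ``stated in a version suitable for the later application,'' so there is no in-paper argument to compare against. What you supply is the standard change-of-variables derivation of this lower-dimensional form from the canonical one: pull $K$ back along $T^+$ to a full-dimensional symmetric body $K' = T^+K \subseteq \R^k$ whose John ellipsoid is $B_2^k$, apply the symmetric John decomposition $\sum_i \lambda_i\, u_i u_i^{\trans} = I_k$ with contact points $u_i \in {\rm bd}(K') \cap \sphere{k}$, normalize by the trace identity $\sum_i \lambda_i = k$ to obtain a probability vector, and push forward through $T$ to get $\sum_i p_i\, z_i z_i^{\trans} = \tfrac{1}{k}TT^{\trans}$. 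All of this is sound. One step worth stating more explicitly is the equality ${\rm aff}(K) = {\rm Im}(T)$: from $K \subseteq E = TB_2^k$ one only gets ${\rm aff}(K) \subseteq {\rm Im}(T)$ for free, and equality uses that $T$ must have full column rank $k$ (forced because $E$ contains the $k$-dimensional set $K$ while $T$ has only $k$ columns), so both are $k$-dimensional linear subspaces. Once that is in place, the remaining reductions---that $T^+$ restricted to $W={\rm Im}(T)$ is a linear isomorphism onto $\R^k$, that it transports relative boundaries to boundaries, and that invertible linear maps carry minimum-volume enclosing ellipsoids to minimum-volume enclosing ellipsoids---are all standard and correctly invoked.
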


For a real-valued function $f:\R \rightarrow \R$, we denote its right-sided derivative at~$a\in\R$ by 
\[
\frac{d_+}{dx} f |_{x=a} = \lim_{\eps \rightarrow 0^+} \frac{f(a+\eps)-f(a)}{\eps} \text{.}
\]
We will need the following lemma; for  a general theory on
perturbations on linear operators we refer the reader to
\cite{kato1995perturbation}. 

\begin{lemma}\label{lem:derlemma}
\label{prop:standard}  Let $r$ be a positive integer, $X\in \psd^r$ be a non-zero positive semidefinite matrix. Let $\lambda_1 = \opnorm{X}$ and  
$W$ denote the $\lambda_1$-eigenspace of $X$. Then for $Z\in\R^{r\times r}$ symmetric,  
\[
\left.\frac{d_+}{d\eps}\opnorm{X+\eps Z}\right|_{\eps = 0} = \max_{\substack{w \in W \\ \norm{w}=1}} w^\trans  Z w
\] 
\end{lemma}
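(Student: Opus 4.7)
The plan is to establish matching upper and lower bounds on the right-sided difference quotient $q(\eps) := (\opnorm{X+\eps Z}-\lambda_1)/\eps$ for $\eps > 0$, which together force the right-derivative to exist and equal $\mu := \max_{w\in W,\|w\|=1} w^\trans Z w$. As a preliminary reduction, note that since $X \in \psd^r$ has nonnegative spectrum and top eigenvalue $\lambda_1 > 0$, continuity of eigenvalues ensures that for all sufficiently small $\eps > 0$ the quantity $\lambda_{\max}(X+\eps Z)$ dominates $|\lambda_{\min}(X+\eps Z)|$, so $\opnorm{X+\eps Z} = \lambda_{\max}(X+\eps Z)$; it therefore suffices to differentiate $\eps \mapsto \lambda_{\max}(X+\eps Z)$ at $\eps = 0$. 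For the lower bound, fix any unit vector $w \in W$; the Rayleigh characterization gives
\[
\lambda_{\max}(X+\eps Z) \geq w^\trans (X+\eps Z) w = \lambda_1 + \eps\, w^\trans Z w,
\]
so $q(\eps) \geq w^\trans Z w$, and taking the supremum over such $w$ yields $\liminf_{\eps \to 0^+} q(\eps) \geq \mu$.

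For the upper bound, consider any sequence $\eps_n \to 0^+$ and let $v_n \in \sphere{r}$ be a unit eigenvector of $X+\eps_n Z$ for its largest eigenvalue $\mu_n$. By compactness of the sphere, pass to a subsequence with $v_n \to v^*$; by continuity of the spectrum $\mu_n \to \lambda_1$, and passing to the limit in $(X+\eps_n Z)v_n = \mu_n v_n$ yields $Xv^* = \lambda_1 v^*$, i.e., $v^* \in W$. Decompose $v_n = w_n + u_n$ with $w_n \in W$ and $u_n \in W^\perp$, so that $\|u_n\| \to 0$. Letting $\lambda_2 < \lambda_1$ denote the largest eigenvalue of $X$ restricted to the invariant subspace $W^\perp$ (or $\lambda_2 := 0$ in the trivial case $W = \R^r$), the $X$-invariance of $W$ and the fact that $w_n \perp u_n$ give
\[
v_n^\trans X v_n = \lambda_1\|w_n\|^2 + u_n^\trans X u_n \leq \lambda_1 - (\lambda_1-\lambda_2)\|u_n\|^2.
\]
Combined with $v_n^\trans Z v_n \to (v^*)^\trans Z v^* \leq \mu$, this gives
\[
\mu_n - \lambda_1 = v_n^\trans X v_n - \lambda_1 + \eps_n\, v_n^\trans Z v_n \leq -(\lambda_1-\lambda_2)\|u_n\|^2 + \eps_n(\mu + o(1)) \leq \eps_n(\mu + o(1)),
\]
hence $\limsup q(\eps_n) \leq \mu$. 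Since the sequence was arbitrary, $\limsup_{\eps \to 0^+} q(\eps) \leq \mu$, and together with the lower bound this proves the lemma.

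The main subtlety lies in the upper bound: one must control how far the top eigenvector of $X+\eps Z$ can drift from $W$ as $\eps \to 0^+$. The key quantitative input is the spectral gap $\lambda_1 - \lambda_2 > 0$ on $W^\perp$, which penalizes any off-$W$ component of $v_n$ in the quadratic form $v_n^\trans X v_n$ by a quadratic amount $(\lambda_1-\lambda_2)\|u_n\|^2$, and thereby prevents such a component from contributing a first-order benefit to $\lambda_{\max}(X+\eps Z)$; without this gap term, the argument would collapse and the right-derivative could in general exceed $\mu$.
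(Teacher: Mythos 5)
Your proof is correct, and your upper bound takes a genuinely different route from the paper's. The lower bound (Rayleigh quotient restricted to $W$) is identical. For the upper bound, the paper works uniformly over all unit vectors $u=u_1+u_2$ with $u_1\in W$, $u_2\in W^\perp$, bounds the cross terms by $3\eps\opnorm{Z}\norm{u_2}$, and completes a square against the gap term $-\Delta\norm{u_2}^2$ to get the explicit remainder $9\eps^2\opnorm{Z}^2/(4\Delta)$; you instead run a soft compactness argument, extracting a convergent subsequence of top eigenvectors $v_n\to v^*$ and showing $v^*\in W$ by passing to the limit in the eigenvector equation, so that $v_n^\trans Z v_n\le \mu+o(1)$. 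Your version is shorter and avoids constant-tracking, at the cost of giving only an $o(\eps)$ remainder rather than the quantitative $O(\eps^2)$ bound the paper obtains (and implicitly reuses in Corollary~\ref{cor:derivative}); for the lemma as stated, $o(\eps)$ suffices. Two small remarks. First, your closing paragraph slightly misdescribes your own argument: in your final chain you simply discard the term $-(\lambda_1-\lambda_2)\norm{u_n}^2\le 0$, so the spectral gap plays no quantitative role for you --- what does the work is the convergence $v_n\to v^*\in W$, which follows from the eigenvector equation alone; the gap is essential only for the paper's uniform, quantitative estimate. Second, after passing to a subsequence you should note the standard ``every subsequence has a further subsequence with $\limsup\le\mu$'' step to conclude for the full limit $\eps\to 0^+$; this is routine but worth a sentence. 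On the plus side, you explicitly justify that $\opnorm{X+\eps Z}=\lambda_{\max}(X+\eps Z)$ for small $\eps>0$ (using $\lambda_1>0$), a point the paper's proof glosses over.
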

\begin{proof}
Observe that
\begin{align*}
\opnorm{X + \varepsilon Z} &\geq \max_{\norm{w}=1,  w \in W} w^\trans(X +
\varepsilon Z)w = \max_{\norm{w}=1,  w \in W} \underbrace{w^\trans
  Xw}_{= \lambda_1} + \varepsilon w^\trans Zw \\
&= \lambda_1 + \varepsilon \cdot \max_{\norm{w}=1,  w \in W} w^TZw. 
\end{align*}
It therefore suffices to show that \(\opnorm{X + \varepsilon Z}\)
cannot exceed the lower bound by more than $o(\varepsilon)$.

Let \(u\) be an arbitrary vector with \(\norm{u} = 1\) and write \(u = u_1 +
u_2\) with \(u_1 \in W\) and \(u_2 \in W^{\perp}\), where the latter
is the orthogonal complement of \(W\). Clearly, \(\norm{u_1}^2 +
\norm{u_2}^2 = 1\). Further let \(\Delta \coloneqq \lambda_1 -
\lambda_2\) where \(\lambda_2\) is the second largest Eigenvalue of
\(X\) and for readibility let \(\lambda_1(Z \restriction W) \coloneqq
\max_{\norm{w}=1,  w \in W} w^\trans Z w\). We estimate
\begin{align*}
  u^\trans (X + \varepsilon Z)u &= u_1^\trans X u_1 + u_2^\trans X u_2
  + \varepsilon (u_1^\trans Z u_1 + u_1^\trans Z u_2 + u_2^\trans Z
  u_1 + u_2^\trans Z u_2) \\
& \leq \lambda_1 \norm{u_1}^2 + \lambda_2 \norm{u_2}^2 + \varepsilon
\lambda_1(Z \restriction W) + 3 \varepsilon \opnorm{Z} \norm{u_2} \\
& = \lambda_1 + \varepsilon \lambda_1(Z \restriction W) + (3 \varepsilon
\opnorm{Z} - \Delta \norm{u_2})\norm{u_2}\\
& = \lambda_1 + \epsilon
\lambda_1(Z \restriction W) - (\sqrt{\Delta}\norm{u_2}-3\eps\opnorm{Z}/\sqrt{4\Delta})^2 + 9\eps^2\opnorm{Z}^2/(4\Delta) \\
& \leq \lambda_1 + \epsilon \lambda_1(Z \restriction W) + 9\eps^2\opnorm{Z}^2/(4\Delta),
\end{align*}
which finishes the proof. 
\end{proof}

We record the following corollary of Lemma~\ref{lem:derlemma} for later use.
Recall that for a square matrix $X$, its \emph{exponential} is given by
\beqn
e^X = \sum_{k=0}^\infty \frac{1}{k!}X^k = I + X + \frac{1}{2}X^2 + \cdots.
\eeqn

\begin{corollary}\label{cor:derivative}
Let $r$ be a positive integer, $X\in \psd^r$ be a non-zero positive semidefinite matrices.  Let $\lambda_1 = \opnorm{X}$ and $W$ denote the $\lambda_1$-eigenspace of 
$X$. Then for $Z \in \R^{r\times r}$ symmetric, 
\[
\left.\frac{d_+}{d\eps}\opnorm{e^{\eps Z}Xe^{\eps Z}}\right|_{\eps = 0} = 2\lambda_1
\max_{\substack{w \in W \\ \norm{w}=1}} w^\trans  Z w
\] 
\end{corollary}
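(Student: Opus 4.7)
The plan is to reduce to Lemma~\ref{lem:derlemma} by expanding the matrix exponential to first order and absorbing the higher-order terms into an $O(\eps^2)$ remainder. Concretely, from the series $e^{\eps Z} = I + \eps Z + \frac{\eps^2}{2}Z^2 + \cdots$ one obtains
\[
e^{\eps Z} X e^{\eps Z} = X + \eps(ZX + XZ) + \eps^2 R(\eps),
\]
where $R(\eps)$ is a matrix-valued function with $\opnorm{R(\eps)} \leq C$ for all $|\eps|$ in some neighborhood of $0$ (the constant $C$ depending only on $\opnorm{X}$ and $\opnorm{Z}$). The matrix $Y \coloneqq ZX + XZ$ is symmetric, so Lemma~\ref{lem:derlemma} applies to it directly.

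Next, I would let $f(\eps) = \opnorm{e^{\eps Z} X e^{\eps Z}}$ and $g(\eps) = \opnorm{X + \eps Y}$, noting $f(0) = g(0) = \lambda_1$. By the triangle inequality for the operator norm,
\[
|f(\eps) - g(\eps)| \leq \opnorm{e^{\eps Z} X e^{\eps Z} - (X + \eps Y)} = \eps^2 \opnorm{R(\eps)} = O(\eps^2).
\]
Dividing by $\eps$ and sending $\eps \to 0^+$ shows that the right derivatives of $f$ and $g$ at $0$ coincide, so
\[
\left.\frac{d_+}{d\eps} f(\eps) \right|_{\eps=0} = \left.\frac{d_+}{d\eps} g(\eps) \right|_{\eps=0} = \max_{\substack{w \in W \\ \norm{w}=1}} w^\trans Y w,
\]
where the last equality is Lemma~\ref{lem:derlemma} applied with perturbation $Y$.

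Finally, I would simplify $w^\trans Y w$ for $w \in W$. Since $W$ is the $\lambda_1$-eigenspace, $Xw = \lambda_1 w$ and by symmetry $w^\trans X = \lambda_1 w^\trans$. Hence
\[
w^\trans(ZX + XZ)w = \lambda_1 w^\trans Z w + \lambda_1 w^\trans Z w = 2\lambda_1 \, w^\trans Z w,
\]
giving the claimed identity. The only mildly delicate point is the uniform bound on $\opnorm{R(\eps)}$; this is entirely routine from the convergent power series, so no real obstacle arises. The main structural idea is simply that conjugation by $e^{\eps Z}$ behaves to first order like the symmetric additive perturbation $ZX + XZ$, which reduces the computation to the already-proved lemma.
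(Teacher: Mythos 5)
Your argument is correct and follows essentially the same route as the paper: expand $e^{\eps Z}Xe^{\eps Z} = X + \eps(ZX+XZ) + O(\eps^2)$, absorb the remainder via the triangle inequality, apply Lemma~\ref{lem:derlemma} to the symmetric perturbation $ZX+XZ$, and use $Xw=\lambda_1 w$ on $W$ to get the factor $2\lambda_1$. The only cosmetic difference is that you phrase the reduction as equality of right derivatives of $f$ and $g$ from $|f(\eps)-g(\eps)|=O(\eps^2)$, whereas the paper carries the $\pm O(\eps^2)$ error through the asymptotic expansion; these are the same argument.
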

\begin{proof}
Let us write $e^{\eps Z} = \sum_{k=0}^\infty \frac{\eps^k Z^k}{k!} = I + \eps Z + \eps^2 R_\eps$, where $R_\eps = \sum_{k=2}^\infty
\frac{\eps^{k-2}Z^k}{k!}$. For $\eps < 1/(2\opnorm{Z})$, by the triangle inequality
\begin{align*}
\opnorm{R_\eps} \leq \sum_{k=2}^\infty \frac{\eps^{k-2}\opnorm{Z}^k}{k!} 
                \leq \frac{\opnorm{Z}^2}{2} \sum_{k=0}^\infty (\eps \opnorm{Z})^k = \frac{\opnorm{Z}^2}{2(1-\eps\opnorm{Z})} \leq \opnorm{Z}^2
\end{align*}
From here we see that
\[
e^{\eps Z}Xe^{\eps Z} = (I+\eps Z+ \eps^2 R_\eps)X(I + \eps Z + \eps^2 R_\eps) = X + \eps(ZX+XZ) + \eps^2(ZXR_\eps + R_\eps X Z + R_\eps X R_\eps)
\]
Let $R'_\eps = ZXR_\eps + R_\eps X Z + R_\eps X R_\eps$. Again by the triangle inequality, we have that
\[
\opnorm{R'_\eps} \leq 2\opnorm{Z}\opnorm{X}\opnorm{R_\eps} + \opnorm{R_\eps}^2\opnorm{X} \leq 2\opnorm{Z}^3\opnorm{X} + \opnorm{Z}^4\opnorm{X} = O(1) \text{,}
\]
for $\eps$ small enough. Therefore, we have that
\begin{align*}
\opnorm{e^{\eps Z}Xe^{\eps Z}} &= \opnorm{X + \eps(XZ+ZX) + \eps^2 R'_\eps} = \opnorm{X + \eps(XZ+ZX)} \pm O(\eps^2 \opnorm{R'_\eps}) \\
                             &= \opnorm{X + \eps(XZ+ZX)} \pm O(\eps^2) \text{.}
\end{align*}
Since $XZ+ZX$ is symmetric and $X \in \psd^r$ and non-zero, by Lemma~\ref{prop:standard} we have that
\begin{align*}
\opnorm{X+\eps(XZ+ZX)} &= \lambda_1 + \eps (\max_{\substack{w \in W \\ \norm{w}=1}} w^\trans (XZ+ZX)w) \pm O(\eps^2) \\
                     &= \lambda_1 + \eps \lambda_1(\max_{\substack{w \in W \\ \norm{w}=1}}
w^\trans (Z+Z)w) \pm O(\eps^2) \\
                     &= \lambda_1 + 2\lambda_1 \eps(\max_{\substack{w \in W \\ \norm{w}=1}} w^\trans Zw) \pm O(\eps^2) 
\end{align*}
Putting it all together, we get that
\[
\opnorm{e^{\eps Z}Xe^{\eps Z}} = \opnorm{X+\eps(XZ+ZX)} + O(\eps^2) = \lambda_1 +
2\lambda_1\eps(\max_{\substack{w \in W \\ \norm{w}=1}} w^\trans Zw) \pm O(\eps^2) 
\]
as needed.
\end{proof}

\section{Rescaling semidefinite factorizations}
\label{sec:resc-psd-extend}

A crucial point will be the rescaling of a semidefinite factorization of a
nonnegative matrix \(M\). In the case of linear extended formulations an upper bound of
\(\Delta\) on the largest entry of a slack matrix~\(S\) implies the existence of a minimal
nonnegative factorization \(S = UV\) where the entries of \(U,V\) are bounded by
\(\sqrt{\Delta}\). This ensures that the approximation of the extended formulation can be
captured by means of a polynomial-size (in~$\Delta$) grid. In the linear case, we note that any
factorization $S = UV$ can be rescaled by a nonnegative diagonal matrix $D$ where $S = (U
D)(D^{-1} U)$ and the factorization $(UD,D^{-1}V)$ has entries bounded by $\sqrt{\Delta}$.
However, such a rescaling relies crucially on the fact that after independent nonnegative
scalings of the coordinates a nonnegative vector remains nonnegative. However, in the
PSD setting, it is not true that the PSD property is preserved after independent
nonnegative scalings of the matrix entries. We circumvent this issue by showing that a
restricted class of transformations, i.e. the symmetries of the semidefinite cone, suffice
to rescale any PSD factorization such that the largest eigenvalue occurring in the
factorization is bounded in terms of the maximum entry in $M$ and the rank of the factorization.

\begin{theorem}[Rescaling semidefinite factorizations]
\label{thm:psdRescale}
Let $\Delta$ be a positive real number, $I,J$ be finite sets, $M \in[0,\Delta]^{I\times J}$ be a nonnegative matrix with a rank $r$ semidefinite
factorization factorization  $({\mathbf U}, {\mathbf V})$, ${\mathbf U} = (U_i)_{i\in I}$, ${\mathbf V} = (V^j)_{j\in J}$, satisfying $M_{ij} = \tr{U_i
V^j}$, $i \in I, j \in J$. Then there exists $A \in \psd^r$ such that $A{\mathbf U}A = (AU_iA)_{i \in I}$, $A^+{\mathbf V}A^+ = (A^+V^jA^+)_{j \in J}$ is a semidefinite factorization of $M$ satisfying
\begin{align*}
\lmax{A{\bf U}A} &= \max_{i\in I} \opnorm{AU_iA} \leq \sqrt{r\Delta} \\
\lmax{A^+{\bf V}A^+} &= \max_{j\in J}\opnorm{A^+V^jA^+}\leq \sqrt{r\Delta} \text{ .}
\end{align*}
\end{theorem}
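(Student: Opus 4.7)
The plan is to reduce the theorem to a feasibility question for a single PSD matrix, then resolve it using John's theorem applied to an auxiliary convex body. Writing $A = t B^{1/2}$ for a scalar $t > 0$ to be chosen at the end, the bounds $\opnorm{A U_i A} \leq \sqrt{r\Delta}$ and $\opnorm{A^+ V^j A^+} \leq \sqrt{r\Delta}$ translate to $t^2 \opnorm{B^{1/2} U_i B^{1/2}} \leq \sqrt{r\Delta}$ and $t^{-2} \opnorm{B^{+1/2} V^j B^{+1/2}} \leq \sqrt{r\Delta}$. It therefore suffices to find $B \in \psd^r$ with $\opnorm{B^{+1/2} V^j B^{+1/2}} \leq 1$ for every $j$ and $\opnorm{B^{1/2} U_i B^{1/2}} \leq r \Delta$ for every $i$; the rescaling $t = (r\Delta)^{-1/4}$ then makes both bounds equal $\sqrt{r \Delta}$.

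To construct $B$, I would work inside $W := \sum_j \mathrm{Im}(V^j) \subseteq \R^r$ of dimension $k \leq r$, and consider the centrally symmetric convex body $K := \overline{\mathrm{conv}}\bigl(\bigcup_{j \in J} (V^j)^{1/2} B_2^r\bigr) \subseteq W$. A direct computation shows that the PSD condition $V^j \preceq B$ is equivalent to the ellipsoidal containment $(V^j)^{1/2} B_2^r \subseteq B^{1/2} B_2^r$, so these conditions for every $j$ are jointly equivalent to $K \subseteq B^{1/2} B_2^r$. Accordingly, I would choose $B$ so that $B^{1/2} B_2^r$ is the minimum volume ellipsoid of $K$ within $W$, furnished by Theorem~\ref{thm:John}; from the decomposition guaranteed there, the standard John sandwich $B^{1/2} B_2^r \subseteq \sqrt{k}\,K$ follows for centrally symmetric $K$.

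The quantitative heart of the argument would be the estimate $y^\trans U_i y \leq \Delta$ for every $y \in K$ and every $i$. Writing $y$ as a convex combination $\sum_j \alpha_j (V^j)^{1/2} z_j$ with $\|z_j\|_2 \leq 1$ and applying Jensen's inequality to the convex quadratic $y \mapsto y^\trans U_i y$, this reduces to bounding $z_j^\trans (V^j)^{1/2} U_i (V^j)^{1/2} z_j$ by $\opnorm{(V^j)^{1/2} U_i (V^j)^{1/2}} \leq \mathrm{Tr}\bigl[(V^j)^{1/2} U_i (V^j)^{1/2}\bigr] = M_{ij} \leq \Delta$. Combined with the John sandwich, any $y \in B^{1/2}B_2^r$ equals $\sqrt{k}\,y'$ for some $y' \in K$, so $y^\trans U_i y \leq k \Delta \leq r \Delta$, which is precisely $\opnorm{B^{1/2} U_i B^{1/2}} \leq r \Delta$, as needed.

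The main obstacle will be the degenerate regime $k < r$, where $B$ is rank-deficient and one must work with the pseudo-inverse $A^+ = B^{+1/2}$ rather than a genuine inverse. The factorization identity survives because $\mathrm{Tr}\bigl[(A U_i A)(A^+ V^j A^+)\bigr] = \mathrm{Tr}[U_i P_W V^j P_W]$, and the inclusion $\mathrm{Im}(V^j) \subseteq W$ forces $P_W V^j P_W = V^j$; all remaining computations restrict transparently to $W$, keeping the dimension constant at $k \leq r$ and leaving the final bound $\sqrt{r\Delta}$ intact after the rescaling step.
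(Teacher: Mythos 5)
Your proposal is correct, and it takes a genuinely different route from the paper. The paper's proof is variational: after a preprocessing step that makes the factorization full-dimensional, it minimizes the potential $\Phi_M(\mathbf U,\mathbf V)=\lmax{\mathbf U}\cdot\lmax{\mathbf V}$ over symmetric rescalings, proves by compactness that the infimum is attained, and then rules out $\mu^2>r\Delta$ by perturbing the minimizer with $e^{\pm\eps Z}$, where $Z$ comes from applying Theorem~\ref{thm:John} to the convex hull of the top eigenspaces of the extremal $U_i$'s; this requires the derivative formulas of Lemma~\ref{lem:derlemma} and Corollary~\ref{cor:derivative}. You instead exhibit the rescaling matrix explicitly: $B=TT^\trans$ is the John ellipsoid matrix of $K=\conv{\bigcup_j (V^j)^{1/2}B_2^r}$, so that $K\subseteq B^{1/2}B_2^r$ gives $V^j\preceq B$ and hence the $\mathbf V$-side bound for free, while the symmetric John sandwich $B^{1/2}B_2^r\subseteq\sqrt{k}\,K$ together with $\opnorm{(V^j)^{1/2}U_i(V^j)^{1/2}}\leq\tr{U_iV^j}\leq\Delta$ and Jensen yields the $\mathbf U$-side bound $k\Delta\leq r\Delta$; a single scalar then balances the two sides. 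Your argument is shorter, avoids the compactness and perturbation machinery entirely, handles rank-deficiency directly through pseudo-inverses rather than via the paper's dimension-reduction claim, and even gives the marginally stronger bound $\sqrt{k\Delta}$ with $k=\dim(\sum_j{\rm Im}(V^j))$. In a full write-up you should spell out three routine points: the derivation of $B^{1/2}B_2^r\subseteq\sqrt{k}\,K$ from the contact-point identity of Theorem~\ref{thm:John} (a one-line support-function computation using $\tfrac{1}{k}\|T^\trans u\|^2=\sum_z p(z)(z^\trans u)^2\leq\max_z(z^\trans u)^2\leq h_K(u)^2$), the equivalence $V^j\preceq B\Leftrightarrow(V^j)^{1/2}B_2^r\subseteq B^{1/2}B_2^r$ in the degenerate regime (which needs ${\rm Im}(V^j)\subseteq{\rm Im}(B)$, guaranteed by your choice of $W$), and the trivial case $W=\{0\}$, i.e., $M=0$.
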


\begin{proof}
Let $\bar{U} = \sum_{i \in I} U_i/|I|$, $\bar{V} = \sum_{j \in J} V^j/|J|$. Let $W_1 = {\rm Im}(\bar{U})$, $W_2 = {\rm Im}(\bar{V})$, $W =
P_{W_1}(W_2)$ and $d = \dim(W)$. Let $O \in \R^{r \times d}$ denote an orthonormal basis matrix for $W$, that is ${\rm Im}(O) = W$,
$OO^\trans = P_W$, and $O^\trans O = I_d$ (the $d \times d$ identity). 

As a first step, we preprocess the factorization to make it full
dimensional (i.e., by reducing the ambient dimension).

\begin{claim}\label{cl:resc-1} $(O^\trans {\mathbf U} O, O^\trans \mathbf{V} O)$ is a semidefinite factorization of $M$.
Furthermore, $O^\trans \bar{U} O$ and $O^\trans \bar{V} O$ are $d \times d$ nonsingular matrices.\end{claim}
\begin{claimproof}
If $T \in \psd^r$ then for any matrix $A \in \R^{r \times d}$, we have that $A^\trans T A \in \psd^d$. Hence $O^\trans U_i O,O^\trans V^jO \in \psd^d$, for all $i \in I,
j \in J$. To show that the new matrices factorize $M$, it suffices to show that $M_{ij} = \tr{O^\trans U_i O O^\trans V^j O}$ for all $i \in I,j \in J$. We examine
spectral decompositions of $U_i$ and $V_j$, 
\[
U_i = \sum_{k=1}^r \lambda_k u_k u_k^\trans \quad \text{ and } \quad V^j = \sum_{k=1}^r \gamma_k v_kv_k^\trans \text{.}
\]
For $k \in [r]$, we have that $u_k \in {\rm Im}(U_i) \subseteq \sum_{i \in I} {\rm Im}(U_i) = {\rm Im}(\bar{U}) = W_1$.
Similarly for $l \in [r]$, $v_l \in {\rm Im}(V^j) \subseteq {\rm Im}(\bar{V}) = W_2$. Given the previous containment,
remembering that $P_{W_1}(W_2) = W$, for $k,l \in [r]$ we have that
\[
\pr{u_k}{v_l} = \pr{P_{W_1}u_k}{v_l} = \pr{u_k}{P_{W_1} v_l} = \pr{u_k}{P_W v_l} = \pr{P_W u_k}{P_W v_l} = \pr{O^\trans u_k}{ O^\trans v_l} \text{,}
\]
since $O$ is an orthonormal basis matrix for $W$.  The trace inner product can now be analyzed as follows 
\begin{align*}
\tr{U_iV^j} &= \sum_{1 \leq k,l \leq r} \lambda_k \gamma_l \pr{u_k}{v_l}^2 
            = \sum_{1 \leq k,l \leq r} \lambda_k \gamma_l \pr{O^\trans u_k}{O^\trans v_l}^2 \\
            &= \tr{(O^\trans U_i O)(O^\trans V^j O)} \text{.}
\end{align*}
Hence $(O^\trans {\mathbf U} O, O^\trans {\mathbf V} O)$ is a semidefinite factorization of $M$ as needed. 

For the furthermore, we must show that the matrices $O^\trans \bar{U} O$ and $O^\trans \bar{V} O$ have trivial kernels. By construction ${\rm Ker}(\bar{U}) = W_1^\perp$, 
${\rm Ker}(\bar{V}) = W_2^\perp$, and 
\[
W = P_{W_1}(W_2) = (W_2 + W_1^\perp) \cap W_1 \text{.}
\]
From here, we have that
\[
\dim({\rm Ker}(O^\trans \bar{U} O)) = \dim({\rm Ker}(\bar{U} O)) = \dim(W_1^\perp \cap W) \leq \dim(W_1^\perp \cap W_1) = \dim(\{0\}) = 0 \text{.}
\]
Next, we have that
\begin{align*}
\dim({\rm Ker}(O^\trans \bar{V} O)) &= \dim({\rm Ker}(\bar{V} O)) = \dim(W_2^\perp \cap W) = \dim((W_2^\perp \cap W_1) \cap (W_2 + W_1^\perp)) \\
                              &= \dim((W_2 + W_1^\perp)^\perp \cap (W_2 + W_1^\perp)) = \dim(\{0\}) = 0 \text{,}
\end{align*}
as needed.
\end{claimproof} 

We will now examine factorizations of the form $(A O^\trans {\mathbf U} O A, A^{-1} O^\trans {\mathbf V} O A^{-1})$,
for $A \in \psd^d$ nonsingular. To see that this yields a factorization, note that
\begin{align*}
\tr{A O^\trans U_i O A A^{-1} O^\trans V^j O A^{-1}} = \tr{A O^\trans
  U_i O O^\trans V^j O A^{-1}}  \\ = \tr{O^\trans U_i O O^\trans V^j O A^{-1} A} 
                                                = \tr{O^\trans U_i O O^\trans V^j O } = M_{ij} \text{,}
\end{align*}
where the last inequality follows from Claim \ref{cl:resc-1}. To prove the theorem, it suffices to construct a nonsingular matrix $A \in
\psd^d$ such that 
\[
\lmax{A O^\trans {\bf U} O A} \leq \sqrt{d \Delta} \quad \text{ and }  \lmax{A^{-1} O^\trans {\bf V} O A^{-1}} \leq \sqrt{d \Delta} \text{.}
\]
Given such an $A$, we can recover the rescaling matrix claimed in the theorem using $O A O^\trans$, where $(O A O^\trans)^+ = O A^{-1} O^\trans$. It
is easy to check that this lifting is valid and preserves the maximum eigenvalues of the factorization matrices.

Given the above reduction, we may now assume that $d=r$ and that $\bar{U}$, $\bar{V}$ are nonsingular. We define the following potential function over
factorizations, 
\[
\Phi_M({\bf U},{\bf V}) = \lmax{{\bf U}}\cdot\lmax{{\bf V}} \text{.}
\]
We now examine the optimization problem 
\begin{equation}
\label{eq:resc-opt}
\inf_{\substack{A \in \psd^r \\ A \text{ nonsingular}}} \Phi_M(A{\bf U}A,A^{-1}{\bf V}A^{-1}) \text{ .}
\end{equation}
For any nonsingular $T \in \R^{r \times r}$, $(T{\bf U}T^\trans,
T^{-\trans}{\bf V}T^{-1})$ is a valid PSD factorization of
$M$. Without loss of generality we can require $T$ to be PSD as above,
since $T$ can be always be expressed as $T = O A$, where $O$ is
orthogonal and $A \in \psd^r$.  Here it is easy to
check that substituting $A$ for $T$ does not change the $\Phi_M$ value of the factorization.

Recall that the goal is to construct a nonsingular $A \in \psd^r$ such that 
\[
\lmax{A{\bf U}A} \leq \sqrt{r \Delta} \quad \text{ and } \quad \lmax{A^{-1}{\bf V}A^{-1}} \leq \sqrt{r \Delta} \text{ .}
\]
For any scalar $s > 0$, we see that 
\[
\lmax{sA{\bf U}sA} = s^2 \lmax{A{\bf U}A} \quad \text{ and } \quad \lmax{(sA)^{-1}{\bf V}(sA)^{-1}} = \lmax{A^{-1}{\bf V}A^{-1}}/s^2 \text{.}
\]
Given this, if $\Phi_M(A{\bf U}A,A^{-1}{\bf V}A^{-1}) \leq \mu^2$ then setting 
\[
s = \Phi_M(A{\bf U}A,A^{-1}{\bf V}A^{-1})^{1/4}/\lmax{A{\bf U}A}^{1/2} \text{,}
\]
we get that
\begin{align*}
\lmax{sA{\bf U}sA} = \lmax{(sA)^{-1}{\bf V}(sA)^{-1}} = \Phi_M(A{\bf U}A,A^{-1}{\bf V}A^{-1})^{1/2} \leq \mu \text{.}
\end{align*}
Hence it suffices to show that the infimum value for
\eqref{eq:resc-opt} is less than or equal to $r \Delta$.  We claim
that this infimum is attained. Since the objective functions is
clearly continuous in $A$, it suffices to show that the infimum can be
taken over a compact subset of $\psd^r$. Let $\tau = \Phi_M({\bf U},
{\bf V})$, and $\sigma > 0$ be the largest value such that $\bar{U}
\succeq \sigma I_r$, $\bar{V} \succeq \sigma I_r$. Note that $\sigma >
0$ exists since $\bar{U}$,$\bar{V}$ are nonsingular $r \times r$ PSD
matrices.

\begin{claim}\label{cl:resc-2} Let $A \in \psd^r$ nonsingular. If $\Phi_M(A{\bf U}A,A^{-1}{\bf V}A^{-1}) \leq \tau$, then there exists $s > 0$ such that 
$I_r \preceq sA \preceq (\tau/\sigma^2) I_r$.
\end{claim}
\begin{claimproof}
We examine the spectral decomposition of $A = \sum_{i=1}^r \lambda_i v_i v_i^\trans$, where $v_1,\dots,v_r$ form an orthornomal basis of $\R^r$ and
$\lambda_1 \geq \dots \geq \lambda_r \geq 0$. Note that $A^{-1} = \sum_{i=1}^r \lambda_i^{-1} v_i v_i^\trans$. Here $\lambda_r > 0$ since $A$ is
nonsingular.  Since multiplying $A$ by a positive scalar does not
change the potential $\Phi_M$, we may rescale $A$ such that $\lambda_r = 1$. Since $\lambda_r I_r
\preceq A \preceq \lambda_1 I_r$, and $\lambda_r = 1$, we must now show that $\lambda_1 \leq \tau/\sigma^2$.

We lower bound $\Phi(A)$ in terms of $\lambda_1$. Firstly, note that
\begin{align*}
\lmax{A{\bf U}A} &= \max_{i \in I} \opnorm{A U_i A} \geq \max_{i \in I} v_1^\trans A U_i A v_1 = \lambda_1 \max_{i \in I} v_1^\trans U_i v_1 \\
          &\geq \lambda_1 \frac{1}{|I|} \sum_{i \in I} v_1^\trans U_i v_1 = \lambda_1 v_1^\trans \bar{U} v_1 \geq \sigma \lambda_1 \text{.}
\end{align*}
Next, we have that
\begin{align*}
\lmax{A^{-1}{\bf V}A^{-1}} &= \max_{j \in J} \opnorm{A^{-1} V^j A^{-1}} \geq \max_{j \in J} v_r^\trans A^{-1} V^j A^{-1} v_r = \lambda_r^{-1} \max_{j \in J} v_r^\trans V^j v_r \\
          &= \max_{j \in J} v_r^\trans V^j v_r \geq \frac{1}{|J|} \sum_{j \in J} v_r^\trans V^j v_r = v_r^\trans \bar{V} v_r \geq \sigma \text{.}
\end{align*}
Therefore 
\[
\tau \geq \lmax{A{\bf U}A} \lmax{A^{-1}{\bf V}A^{-1}} \geq \lambda_1 \sigma^2 \Rightarrow \lambda_1 \leq \tau/\sigma^2 \text{,}
\]
as needed.
\end{claimproof}

From the above claim, and our assumption that $\Phi_M({\bf U}, {\bf V}) = \tau$, we get that
\[
\inf_{\substack{A \in \psd^r \\ A \text{ nonsingular}}} \Phi_M(A{\bf U}A, A^{-1}{\bf V}A^{-1})  = 
\inf_{\substack{A \in \psd^r \\ I_r \preceq A \preceq (\tau/\sigma^2) I_r}} \Phi_M(A{\bf U}A, A^{-1}{\bf V}A^{-1}) \text{.}
\] 
Since the infimum on the right hand side is taken on a compact set, the infimum is attained as claimed. Let $\mu^2$
denote the infimum value. Letting $(\widetilde{\mathbf U},\widetilde{\mathbf V}) = (A{\mathbf U}A, A^{-1}{\mathbf V}A^{-1})$, for the appropriate
matrix $A \in \psd^r$, we can assume
\[
\lmax{\widetilde{\mathbf U}} = \lmax{\widetilde{\mathbf V}} = \Phi_M(\widetilde{\mathbf U},\widetilde{\mathbf V})^{1/2} = \mu \text{.}
\]  
We shall now analyze how $\Phi_M$ behaves under small perturbations of the minimizer $(\widetilde{\mathbf U},\widetilde{\mathbf V})$.  Our goal is
to obtain a contradiction by assuming that~$\mu^2 > \Delta r + \tau$ for some~$\tau>0$.  To this end we bound the value of $\Phi_M$ at infinitesimal
perturbations of the point~$(\widetilde{\mathbf U},\widetilde{\mathbf V})$.  For a symmetric matrix~$Z$ and parameter~$\eps>0$ the type of
perturbations we consider are those defined by the invertible matrix~$e^{-\eps Z}$, which will take the role of the matrix~$A$ above.  Notice that
if~$Z$ is symmetric, then so is~$e^{-\eps Z}$.  We show that there exists a matrix~$Z$ such that for every $U\in\{\widetilde U_i\st i\in I\}$ such
that $\opnorm{U} = \mu$, we have
\beq\label{eq:Uvar}
\opnorm{e^{-\eps Z}U e^{-\eps Z}} \leq \mu  - \frac{2\mu}{r}\eps + O(\eps^2),
\eeq
while at the same time for every $V \in \{\widetilde V^j\st j\in J\}$ such that~$\opnorm{V} = \mu$, we have
\beq\label{eq:Vvar}
\opnorm{e^{\eps Z}V  e^{\eps Z}}  \leq \mu + \frac{2\Delta}{\mu}\eps + O(\eps^2).
\eeq
This implies that there is a point~$(\mathbf U',\mathbf V')$ in the neighborhood of the minimizer~$(\widetilde{\mathbf U},\widetilde{\mathbf V})$ where
\beqrn
\Phi_M(\mathbf U',\mathbf V') 
&\leq& 
\Big(\mu  - \frac{2\mu}{r}\eps + O(\eps^2) \Big)\cdot \Big(\mu + \frac{2\Delta}{\mu}\eps + O(\eps^2) \Big) \\
&=& \mu^2  - 2\Big( \frac{\mu^2}{r} - \Delta\Big)\eps + O(\eps^2)\\
&<& \mu^2 - \frac{2\tau}{r}\eps + O(\eps^2),
\eeqrn
where the last inequality follows from our assumption that~$\mu^2 > \Delta r + \tau$.
Thus, for small enough~$\eps>0$, we have~$\Phi_M(\mathbf U',\mathbf
V') <\mu^2$, a contradiction to the minimality of~$\mu$. It suffices
to consider the factorization matrices with the largest eigenvalues as
small perturbations cannot change the eigenvalue structure.
Hence, to prove the theorem we need to show the existence of such a matrix~$Z$.
\medskip

Let~$\mathcal Z\subseteq \sphere{r}$ be a finite set of unit vectors such that every~$z\in\mathcal Z$ is a $\mu$-eigenvector of at least one
of the matrices~$\widetilde U_i$ for $i\in I$. Let $p\in\R^\mathcal Z_+$ be a probability vector (i.e., $\sum_{z\in\mathcal Z}p(z) = 1$) and define the symmetric matrix
\beq\label{eq:Zdef}
Z = \sum_{z\in\mathcal Z}p(z)\, zz^\trans.
\eeq

\begin{claim}\label{clm:Vder}
Let $V \in \{\widetilde V^j\st j\in J\}$ be one of the factorization matrices such that $\opnorm{V} = \mu$.
Then,
\beq\label{eq:Vder}
\left.\frac{d_+}{d\eps}\opnorm{e^{\eps Z}V e^{\eps Z}}\right|_{\eps = 0} \leq \frac{2\Delta}{\mu}.
\eeq
\end{claim}

\begin{claimproof}
Let $\mathcal V\subseteq\sphere{r}$ be the set of eigenvectors of~$V$ that have eigenvalue~$\mu$.
Then, Corollary~\ref{cor:derivative} gives
\beq\label{eq:Vconv}
\left.\frac{d_+}{d\eps}\opnorm{e^{\eps Z}V e^{\eps Z}}\right|_{\eps = 0} 
= 2\mu\max_{v\in\mathcal V} v^{\trans}Zv = 2\mu \max_{v\in \mathcal V} \sum_{z \in \mathcal Z} p(z) (z^\trans v)^2
\eeq
We show that for any~$z\in\mathcal Z$ and~$v\in\mathcal V$, we have~$(z^\trans v)^2\leq \Delta/\mu^2$.
The claim  then follows from~\eqref{eq:Vconv} since~$p$ is a probability vector.
Let us fix vectors~$z\in\mathcal Z$ and~$v\in\mathcal V$ and let $U\in\{\widetilde U_i\st i\in I\}$ be a factorization matrix such that $z$ is a $\mu$-eigenvector of~$U$.  
Recall that the matrices~$U$ and~$V$ are part of a semidefinite factorization of the matrix~$M$ and that we assumed the entries of~$M$ to have value at most~$\Delta$.
Hence, $\Tr[U^\trans V] \leq \Delta$.
We now argue that~$\mu^2(z^\trans v) \leq \Tr[U^\trans V]$.
Let \(U = \sum_{k \in [r]}\lambda_ku_ku_k^{\trans}\) and $V = \sum_{\ell\in[r]}\gamma_\ell v_\ell v_\ell^{\trans}$ be spectral decompositions of~$U$ and~$V$,
respectively, such that~$u_1 = z$ and~$v_1 = v$.
The $\lambda_k$ and $\gamma_\ell$ are nonnegative (as~$U,V$ are PSD) and~$\lambda_1 = \gamma_1 = \mu$.
Hence, expanding the trace inner product
\beq\label{eq:exptr}
\Tr[U^\trans V] =  \sum_{k,\ell \in [r]} \lambda_k \gamma_\ell (u_k^\trans v_\ell)^2,
\eeq
we get that the terms on the right-hand side of~\eqref{eq:exptr} are nonnegative and that the sum in~\eqref{eq:exptr} is at least~$\lambda_1\gamma_1(u_1^\trans v_1)^2 = \mu^2(z^\trans v)^2$.
Putting these observations together we conclude that~$\mu^2(z^\trans v)^2\leq \Tr[U^\trans V] \leq\Delta$, which proves the claim.
\end{claimproof}

\begin{claim}\label{clm:Uder}
There exists a choice of unit vectors~$\mathcal Z$ and probabilities $p$ such that the following holds. Let $I' = \{i \in I \st \|\widetilde U_i\| = \mu\}$. Then, for~$Z$ as in~\eqref{eq:Zdef} we have
\beq\label{eq:Uder}
\left.\frac{d_+}{d\eps} \opnorm{e^{-\eps Z}\widetilde{U}_ie^{-\eps Z}}\right|_{\eps=0} \leq -\frac{2\mu}{r} \quad \forall i \in I'\text{.}
\eeq
\end{claim}

\begin{claimproof}
For every~$i\in I'$, let $\mathcal U_i\subseteq \R^r$ be the vector space spanned by the~$\mu$-eigenvectors of~$\widetilde U_i$.  Define the convex
set \(K = \conv{\bigcup_{i\in I'}(\mathcal U_i\cap B_2^r)}\). Notice that~$K$ is centrally symmetric. Let $k = \dim(K)$, and let $T \in \R^{r \times
k}$ denote a linear transformation such that that~$E = TB_2^k$ is the smallest volume ellipsoid containing~$K$. By John's Theorem, there exists a
finite set~$\mathcal Z\subseteq {\rm relbd}(K)\cap{\rm relbd}(E)$ and a probability vector~$p\in\R^\mathcal Z_+$ such that
\beq\label{eq:ZisT}
Z = \sum_{z\in\mathcal Z}p(z)\, zz^\trans = \frac{1}{k}{TT^{\trans}}.
\eeq
Notice that each~$z\in\mathcal Z$ must be an extreme point of~$K$ (as
it is one for~$E$) and the set of extreme points of~$K$ is exactly
\(\bigcup_{i\in I'} (\mathcal U_i\cap\sphere{r}).\)
Hence, each~$z\in\mathcal Z$ is a unit vector and at the same time a $\mu$-eigenvector of some~$\widetilde U_i$, $i\in I'$.

For $i \in I'$, by Corollary~\ref{cor:derivative} and~\eqref{eq:ZisT} we have that
\beqrn\label{eq:issue}
\left.\frac{d_+}{d\eps} \opnorm{e^{-\eps Z}\widetilde{U}_ie^{-\eps Z}}\right|_{\eps=0} 
&=& 
2\mu\max\{u^{\trans}(-Z)u\st u\in\mathcal U_i\cap \sphere{r}\}\\
&=& 
-2\mu\min\{u^{\trans}Zu\st u\in\mathcal U_i\cap \sphere{r}\}\\
&=&
-\frac{2\mu}{k}\min\{u^{\trans}TT^\trans u\st u\in\mathcal U_i\cap \sphere{r}\} \\
&\leq& -\frac{2\mu}{r}\min\{\norm{T^\trans u}^2 \st u\in\mathcal U_i\cap \sphere{r}\} \text{.}
\eeqrn
Since~$E\supseteq K\supseteq (\mathcal U_i\cap \sphere{r})$, for any~$u\in\mathcal U_i\cap\sphere{r}$, we have
\beqn
\norm{T^\trans u} = \sup_{x\in E} x^{\trans}u \geq \sup_{y\in K} y^{\trans}u \geq u^\trans u = 1 \text{ as needed.}
\eeqn
\end{claimproof}

Notice that the first claim implies~\eqref{eq:Vvar} and the second claim implies~\eqref{eq:Uvar}.
Hence, our assumption~$\mu^2 > \Delta r + \tau$ contradicts that~$\mu$ is the minimum value of~$\Phi_M$.
\end{proof}

\section{0/1 polytopes with high semidefinite xc}
\label{sec:exist-01-polyt}

The lower bound estimation will crucially rely on the fact that any 0/1
polytope in the \(n\)-dimensional unit cube can be written as a linear
system of inequalities \(Ax \leq b\) with integral coefficients where the largest coefficient is
bounded by \((\sqrt{n+1})^{n+1} \leq 2^{n\log(n)}\), see e.g.,
\cite[Corollary 26]{Ziegler:2000}. Using Theorem~\ref{thm:psdRescale} the proof follows along the lines of
\cite{Rothvoss11}; for simplicity and exposition we chose a
compatible notation. We use different estimation however and
we need to invoke Theorem~\ref{thm:psdRescale}. In the following let
\(\psd^r(\alpha) = \face{X \in \psd^r \mid \opnorm{X} \leq \alpha}\).

\begin{lemma}[Rounding lemma] 
\label{lem:roundingsdp}
For a positive integer $n$ set $\Delta \coloneqq
(n+1)^{(n+1)/2}$. Let $\mathcal X\subseteq\binSet^n$ be a nonempty
set, let $r \coloneqq \xcs(\conv{\mathcal X})$ and let $\delta \leq \big(16r^3(n+r^2)\big)^{-1}$.
Then, for every $i\in [n+r^2]$ there exist:
\begin{enumerate}
\item an integer vector $a_i\in\Z^n$ such that $\|a_i\|_\infty \leq \Delta$,
\item an integer~$b_i$ such that~$|b_i|\leq \Delta$,
\item a matrix~$U_i\in\psd^r(\sqrt{r\Delta})$ whose entries are
  integer multiples of~$\delta/\Delta$ and have absolute value at
  most~$8r^{3/2}\Delta$, such that
\end{enumerate}
\beqn
\mathcal X = \Big\{x\in\binSet^n\st\exists Y\in\psd^r(\sqrt{r\Delta}):\: \big|b_i -
a_i^{\trans}x - \langle Y, U_i\rangle\big| \leq \frac{1}{4(n+r^2)}\:\: \forall i\in [n+r^2]\Big\}.
\eeqn
\end{lemma}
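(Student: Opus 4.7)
The plan is to combine three ingredients: the integer facet description of 0/1 polytopes (Ziegler), Yannakakis's factorization theorem for SDPs (Theorem~\ref{thm:yannaSDP}), and the rescaling theorem (Theorem~\ref{thm:psdRescale}). First I would write $P \coloneqq \conv{\mathcal X} = \{x : a'_j{}^\trans x \leq b'_j,\ j \in J\}$ with integer $a'_j$ and $b'_j$ of magnitude at most $\Delta$, so that the slack matrix $S_{jx} = b'_j - a'_j{}^\trans x$ takes values in $[0,\Delta]$. Theorem~\ref{thm:yannaSDP} then yields a rank-$r$ semidefinite factorization of $S$, and Theorem~\ref{thm:psdRescale} rescales it to a factorization $(\tilde U_j, \tilde V^x)_{j \in J,\, x \in \mathcal X}$ with $\lmax{\tilde{\mathbf U}}, \lmax{\tilde{\mathbf V}} \leq \sqrt{r\Delta}$.

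Next I would pare down to $n+r^2$ constraints by a dimension argument: the triples $(a'_j,\tilde U_j,b'_j)$ lie in a real vector space of dimension at most $n+r^2$, so a maximal linearly independent subset $I \subseteq J$ has size at most $n+r^2$ (padding with trivial zero triples to reach exactly $n+r^2$ if necessary). Round each chosen $\tilde U_i$ entrywise to the nearest integer multiple of $\delta/\Delta$, adding a tiny correction of the form $\alpha I_r$ with $\alpha = O(r\delta/\Delta)$ to preserve positive semidefiniteness; the resulting matrix $U_i \in \psd^r(\sqrt{r\Delta})$ has entries that are integer multiples of $\delta/\Delta$, bounded in absolute value by $8r^{3/2}\Delta$. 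Set $a_i \coloneqq a'_i$ and $b_i \coloneqq b'_i$.

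For the forward inclusion, given $x \in \mathcal X$ pick $Y \coloneqq \tilde V^x$. Using $\fnorm{\tilde V^x} \leq \sqrt{r}\,\opnorm{\tilde V^x} \leq r\sqrt{\Delta}$ and $\fnorm{\tilde U_i - U_i} \leq r\delta/(2\Delta)$, one gets
\[
|b_i - a_i^\trans x - \langle Y,U_i\rangle| = |\langle \tilde U_i - U_i, \tilde V^x\rangle| \leq \fnorm{\tilde U_i - U_i}\cdot\fnorm{\tilde V^x} \leq \frac{r^2\delta}{2\sqrt{\Delta}},
\]
which is well below $1/(4(n+r^2))$ by the hypothesis on $\delta$. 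For the reverse inclusion, suppose $x \in \{0,1\}^n$ and $Y \in \psd^r(\sqrt{r\Delta})$ satisfy the approximate system. Then $\langle U_i, Y\rangle \geq 0$ combined with integrality of $a_i, b_i, x$ forces $a_i^\trans x \leq b_i$ exactly for every $i \in I$, so $x$ satisfies all chosen facets; to conclude $x \in P \cap \{0,1\}^n = \mathcal X$, one must propagate this bound to the facets $j \in J \setminus I$.

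The main obstacle is this final propagation step. Expanding $(a'_j,\tilde U_j,b'_j) = \sum_{i \in I}\alpha_{ji}(a'_i,\tilde U_i,b'_i)$, the approximate equalities transfer with multiplier $\sum_i |\alpha_{ji}|$, which must remain strictly below $1/\epsilon = 4(n+r^2)$ so that the integral slack of any facet violated by $x$ survives. I expect the basis $I$ to be chosen via an Auerbach- or John-ellipsoid-type argument in the ambient $(a,U,b)$-space so that $\sum_i |\alpha_{ji}|$ is controlled uniformly in $j$; the slack in the bound $8r^{3/2}\Delta$ on entries of $U_i$ (versus the tighter $\sqrt{r\Delta}$) hints at absorbing the cost of such linear combinations downstream.
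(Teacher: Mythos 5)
Your outline coincides with the paper's proof in all of its main components: integer facet description bounded by $\Delta$, Yannakakis plus Theorem~\ref{thm:psdRescale} to get a factorization with $\lmax{\mathbf U},\lmax{\mathbf V}\le\sqrt{r\Delta}$, selection of at most $n+r^2$ constraints, rounding of the $U_i$, and the two inclusions argued exactly as you describe. The one step you explicitly leave open --- controlling $\sum_i|\nu_i|$ when expanding a non-selected constraint $(a_{i^*},U_{i^*})$ over the chosen subsystem --- is resolved precisely as you guess: the paper takes $I'\subseteq I$ with $|I'|=\dim\lspan{(a_i,U_i)\st i\in I}\le n+r^2$ \emph{maximizing the parallelepiped volume} of $\{(a_i,U_i)\st i\in I'\}$, so that Cramer's rule gives $|\nu_i|\le 1$ for every coefficient (your ``Auerbach-type'' choice). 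Then for $x\notin\mathcal X$ one has an $i^*$ with integral violation $a_{i^*}^\trans x-b_{i^*}\ge 1$, and since $\langle U_{i^*},Y\rangle\ge 0$, the quantity $|a_{i^*}^\trans x-b_{i^*}+\langle U_{i^*},Y\rangle|\ge 1$ spreads over at most $n+r^2$ terms each bounded by $|a_i^\trans x-b_i+\langle \bar U_i,Y\rangle|+\tfrac{1}{4(n+r^2)}$, forcing some selected constraint to violate the tolerance; your threshold ``$<4(n+r^2)$'' is met with room to spare since the sum of multipliers is at most $n+r^2$. (One small point: the consistency $\sum_i\nu_i b_i=b_{i^*}$ also has to be noted, which the paper gets from $\mathcal X\ne\emptyset$.)

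The only substantive deviation is the rounding of $U_i$: you round entrywise and restore positive semidefiniteness by adding $\alpha I_r$ with $\alpha=O(r\delta/\Delta)$, whereas the paper rounds the spectral decomposition (eigenvalues and eigenvector entries to multiples of $\delta/\Delta$), which keeps the matrix a nonnegative combination of rank-one PSD terms automatically. Both yield $\fnorm{\bar U_i-U_i}=O(r^2\delta/\sqrt{\Delta})$ and both suffice for the counting in Theorem~\ref{thm:sdpNoEx}; your variant even delivers entries that are literally integer multiples of $\delta/\Delta$ as the lemma states. So: same approach, with the flagged obstacle closing exactly along the lines you anticipated.
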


\begin{proof}
For some index set~$I$ let $\mathcal A = (a_i,b_i)_{i\in I}\subseteq\Z^n\times \Z$ be a non-redundant description of~$\conv{\mathcal X}$ (i.e., $|I|$ is minimal) such that for every~$i\in I$, we have $\mnorm{a_i} \leq \Delta$ and $|b_i|\leq \Delta$.
Let $J$ be an index set for~$\mathcal X = (x_j)_{j\in J}$ and let
$S\in\Z_{\geq 0}^{I\times J}$ be the slack matrix of~$\conv{\mathcal
  X}$ associated with the pair~$(\mathcal A,\mathcal X)$. The largest
entry of the slack matrix is at most \(\Delta\). 
By Yannakakis's Theorem (Theorem~\ref{thm:yannaSDP}) there exists  a semidefinite factorization $(U_i, V^j)_{(i,j)\in I\times J}\subseteq\psd^r\times\psd^r$ of~$S$ such that
\beqn
\conv{\mathcal X} = \{x\in\R^n\st \exists Y\in\psd^r:\: a_i^{\trans}x + \langle U_i, Y\rangle = b_i\:\: \forall i\in I\}.  
\eeqn
By Theorem~\ref{thm:psdRescale} we may assume that ~$\opnorm{U_i}\leq\sqrt{r\Delta}$ for every $i\in I$ and $\opnorm{V^j}\leq\sqrt{r\Delta}$ for every
$j\in J$.  We will now pick a subsystem of maximum volume. For a linearly independent set of vectors $x_1,\dots,x_k \in \R^n$, we let $\vol{\{x_1,\dots,x_k\}}$ denote
the $k$-dimensional parallelepiped volume
\[
\vol{\sum_{i=1}^k a_i x_i \st a_1,\dots,a_k \in [0,1]} = \det((x_i^\trans x_j)_{ij})^{\frac{1}{2}} \text{.}
\]
If the vectors are dependent, then by convention the volume is zero. Let \(\mathcal W = \lspan{(a_i, U_i)\st i\in I}\) and let $I'\subseteq I$ be a subset of size 
$|I'| = \dim(\mathcal W)$ such that $\vol{\{(a_i, U_i)\st i\in I'\}}$ is maximized.  Note that~$|I'|\leq n+r^2$.

For any positive semidefinite matrix~$U\in \psd^r$ with spectral decomposition
\beqn
U = \sum_{k \in [r]}\lambda_k\, u_ku_k^{\trans}, 
\quad\text{we let}\quad
\bar U = \sum_{k \in [r]}\bar\lambda_k\, \bar u_k \bar u_k^{\trans}
\eeqn
be the matrix where for every $k\in [r]$, the value of~$\bar\lambda_k$ is the nearest integer multiple of~$\delta/\Delta$ to~$\lambda_k$ and $\bar
u_k$ is the vector we get by rounding each of the entries of~$u_k$ to the nearest integer multiple of~$\delta/\Delta$.  Since each $u_k$ is a unit
vector, the matrices~$u_ku_k^{\trans}$ have entries in~$[-1,1]$ and it follows that~$U$ has entries in~$r\opnorm{U}[-1,1]$.
Similarly, since each~$\bar u_k$ has entries in~$(1+\delta/\Delta)[-1, 1]$ each of the
matrices~$\bar u_k\bar u_k^{\trans}$ has entries in~$(1+\delta/\Delta)^2[-1, 1]$, and it
follows that~$\bar U$ has entries in~$r(\opnorm{U} + \delta/\Delta)(1+\delta/\Delta)^2[-1,1]$.
In particular, for every~$i\in I'$, the entries of~$\bar U_i$ are bounded in absolute value by
\beqn
r\big(\opnorm{U_i} + \delta/\Delta\big)(1 + \delta/\Delta)^2 \leq r(\sqrt{r\Delta} + \delta/\Delta)(1 + \delta/\Delta)^2 \leq 8r^{3/2}\sqrt{\Delta}.
\eeqn
We use the following simple claim.

\begin{claim}
Let $U$ and $\bar U$ be as above. Then, $\|\bar U - U\|_2 \leq 4\delta r^2/\sqrt{\Delta}$
\end{claim}
\begin{claimproof}
By the triangle inequality we have
\beqrn
\fnorm{\bar U - U}
&=& \fnorm{\sum_{k \in [r]}\bar\lambda_k\, \bar u_k \bar u_k^{\trans} -  \lambda_k\, u_ku_k^{\trans}}\\
&\leq& r\max_{k\in[r]}\fnorm{\bar\lambda_k\, \bar u_k \bar u_k^{\trans} -  \lambda_k\, u_ku_k^{\trans}}\\
&=& r\max_{k\in[r]}\fnorm{(\bar\lambda_k - \lambda_k)\, \bar u_k \bar u_k^{\trans} - \lambda_k(u_ku_k^{\trans} - \bar u_k \bar u_k^{\trans})} \\
&\leq& r\max_{k\in[r]} \frac{\delta}{\Delta}\fnorm{\bar u_k \bar u_k^{\trans}}  + \sqrt{r\Delta} \fnorm{u_ku_k^{\trans} - \bar u_k \bar u_k^{\trans}} \\
&=& r\max_{k\in[r]} \frac{\delta}{\Delta}\bar u_k^{\trans}\bar u_k + \sqrt{r\Delta}\fnorm{(u_k - \bar u_k)u_k^{\trans} - \bar u_k (\bar u_k^{\trans} - u_k^{\trans})} \\
&\leq& r\max_{k\in[r]} \frac{\delta}{\Delta}\Big(1 + \frac{\delta}{\Delta}\sqrt{r} \Big)^2 + \sqrt{r\Delta}\Big(\fnorm{u_k - \bar u_k} + \fnorm{\bar u_k}\fnorm{u_k - \bar u_k} \Big)\\
&\leq&
r \frac{\delta}{\Delta}\Big(1 +
  \frac{\delta}{\Delta}\sqrt{r} \Big)^2 + r
  \sqrt{r\Delta}\Big(\frac{\delta}{\Delta}\sqrt{r} + \big(1 +
  \frac{\delta}{\Delta}\sqrt{r}\big)\frac{\delta}{\Delta}\sqrt{r}\Big)\\
&\leq& r \cdot 4 \delta r / \sqrt{\Delta}.
\eeqrn
The claim now follows from the fact that $\delta\sqrt{r}/\Delta<1$.
\end{claimproof}

Define the set
\beqn
\bar{\mathcal X} = \Big\{x\in\binSet^n\st \exists Y\in\psd^r(\sqrt{r\Delta}):\: \big|b_i -
a_i^{\trans}x - \langle \bar{U_i}, Y\rangle\big|\leq \frac{1}{4(n+r^2)}\:\: \forall i\in I'  \Big\}.
\eeqn
We claim that~$\bar{\mathcal X} = \mathcal X$, which will complete the proof.

We will first show that $\mathcal X\subseteq\bar{\mathcal X}$. 
To this end, fix an index~$j\in J$.
By Theorem~\ref{thm:yannaSDP} we can pick~$Y = V^j\in \psd^r$ such that $a_i^{\trans}x_j + \langle U_i, Y\rangle = b_i$ for every $i\in I'$.
Moreover, $\opnorm{Y} = \opnorm{V^j} \leq \sqrt{r\Delta}$.
This implies that for every $i\in I'$, we have
\begin{multline*}
\big| b_i - a_i^{\trans}x_j - \langle \bar U_i, Y \rangle \big| 
=
\big| \underbrace{b_i - a_i^{\trans}x_j - \langle U_i, Y \rangle}_{0} + \langle \bar U_i - U_i, Y \rangle\big|\\ 
\leq \fnorm{\bar U_i - U_i}\fnorm{Y} \leq 4\delta r^3,
\end{multline*}
where the second line follows from the Cauchy-Schwarz inequality,
the above claim, and~$\fnorm{Y} \leq \sqrt{r}\opnorm{Y} \leq r \sqrt{\Delta}$.
Now, since $4\delta r^3  \leq 4r^3/(16r^3(n+r^2)) = 1/(4(n+r^2))$ we conclude that $x_j\in\bar{\mathcal X}$ and hence~$\mathcal X \subseteq \bar{\mathcal X}$.

It remains to show that $\bar{\mathcal X} \subseteq \mathcal X$. For this we show that
whenever $x\in\binSet^n$ is such that \(x \notin \mathcal X\) it follows that \(x \notin \bar{\mathcal X}\).
To this end, fix an~$x\in\binSet^n$ such that $x\not\in\mathcal X$.
Clearly~$x\notin\conv{\mathcal X}$ and hence, there must be an~$i^*\in I$ such that
$a_{i^*}^{\trans}x >b_{i^*}$.
Since~$x$, $a_{i^*}$ and $b_{i^*}$ are integral we must in fact have~$a_{i^*}^{\trans}x \geq b_{i^*}+1$.
We express this violation in terms of the above selected subsystem corresponding to the set~$I'$.

There exist unique multipliers $\nu\in\R^{I'}$ such that
\(\big(a_{i^*},U_{i^*}\big) = \sum_{i\in I'} \nu_i(a_i, U_i).\)
Observe that this  implies that
\(\sum_{i \in I'} \nu_i b_i = b_{i^*}\); otherwise it would be impossible for
\(a_i^{\trans}x + \langle U_i, Y\rangle = b_i\) to hold for every~$i\in I$ and hence we would have \(\mathcal{X} = \emptyset\) (which we assumed is not the case).

Using the fact that the chosen
subsystem \(I'\) is volume maximizing and using Cramer's rule, 
\[\card{\nu_i} = \frac{\vol{\face{(a_t,U_t) \mid t \in I' \setminus
    \face{i} \cup \face{i^*}}}}{\vol{\face{(a_t,U_t) \mid t \in I'
  }}} \leq 1.\]
For any \(Y \in \psd^r(\sqrt{r\Delta})\) using \(\langle U_{i^*}, Y\rangle \geq 0\) it follows thus
\begin{align*}
  1 &\leq \card{a_{i^*}^{\trans} x - b_{i^*} + \langle U_{i^*}, Y\rangle}  = \card{\sum_{i \in I'}
    \nu_i (a_{i}^{\trans} x - b_{i} + \langle U_{i}, Y\rangle)} \\
&  \leq \sum_{i \in I'}
    \card{\nu_i} \card{ a_{i}^{\trans} x - b_{i} + \langle U_{i}, Y\rangle} \leq (n+r^2)
    \max_{i\in I'}\card{ a_{i}^{\trans} x - b_{i} + \langle U_{i}, Y\rangle}.
\end{align*}

Using a similar estimation as above, for every $i\in I'$, we have
\begin{align*}
\card{ a_{i}^{\trans} x - b_{i} + \langle U_{i}, Y\rangle}
&=
|a_{i}^{\trans} x - b_{i} + \langle \bar U_{i}, Y\rangle + \langle U_i - \bar U_i, Y\rangle|\\
&\leq
|a_{i}^{\trans} x - b_{i} + \langle \bar U_{i}, Y\rangle | + |\langle U_i - \bar U_i, Y\rangle|\\
&\leq
|a_{i}^{\trans} x - b_{i} + \langle \bar U_{i}, Y\rangle | + \frac{1}{4(n+r^2)}.
\end{align*}
Combining this with \(1 \leq (n+r^2)
  \max_{i\in I'} \card{ a_{i}^{\trans} x - b_{i} + \langle U_{i}, Y\rangle}\) we obtain
\[\frac{1}{2(n+r^2)} \leq \frac{1}{n+r^2} - \frac{1}{4(n+r^2)} \leq \max_{i\in I'}\card{
a_{i}^{\trans} x - b_{i} + \langle \bar U_{i}, Y\rangle},\]
and so \(x \notin Y\).

Via padding with empty rows we can ensure that \(\card{I'} = n + r^2\)
as claimed. 
\end{proof}

Using Lemma~\ref{lem:roundingsdp} we can establish the existence of 0/1
polytopes that do not admit any small semidefinite extended formulation
following the proof of \cite[Theorem 4]{Rothvoss11}.

\begin{theorem} 
\label{thm:sdpNoEx}
For any \(n \in \N\) there exists \(\mathcal X \subseteq \binSet^n\) such that 
\[\xcs(\conv{\mathcal X}) = \Omega \left(\frac{2^{n/4}}{(n \log n)^{1/4}}\right ).\]
\end{theorem}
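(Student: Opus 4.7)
The plan is to adapt Rothvo\ss's counting argument from the linear case, replacing his discrete description of a 0/1 polytope by the one furnished in Lemma~\ref{lem:roundingsdp}. Fix $n$ and suppose, for a target $r$ to be determined, that every $\mathcal X \subseteq \binSet^n$ satisfies $\xcs(\conv{\mathcal X}) \leq r$. I would like to show this forces $r \geq \Omega(2^{n/4}/(n\log n)^{1/4})$; since there are $2^{2^n}$ subsets $\mathcal X \subseteq \binSet^n$, it suffices to upper bound the number of distinct $\mathcal X$ which can arise from the rounded description of Lemma~\ref{lem:roundingsdp} and then ensure this count is less than $2^{2^n}$.

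First, with $\Delta = (n+1)^{(n+1)/2}$ and $\delta = 1/(16r^3(n+r^2))$, Lemma~\ref{lem:roundingsdp} says that each such $\mathcal X$ is completely determined by $n+r^2$ triples $(a_i,b_i,\bar U_i)$, where $a_i \in \Z^n \cap [-\Delta,\Delta]^n$, $b_i \in \Z \cap [-\Delta,\Delta]$, and $\bar U_i$ is an $r \times r$ symmetric matrix whose entries are integer multiples of $\delta/\Delta$ with absolute value at most $8r^{3/2}\sqrt{\Delta}$. I would then count: the number of possible $a_i$ is at most $(2\Delta+1)^n$; the number of possible $b_i$ is at most $2\Delta+1$; and the number of possible $\bar U_i$ is at most $\bigl(16r^{3/2}\sqrt{\Delta}\cdot\Delta/\delta + 1\bigr)^{r^2}$. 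Multiplying these together and raising to the power $n+r^2$ gives an upper bound
\[
N(n,r) \;\leq\; \Bigl[(2\Delta+1)^{n+1}\bigl(16r^{3/2}\Delta^{3/2}/\delta + 1\bigr)^{r^2}\Bigr]^{n+r^2}
\]
on the number of sets $\mathcal X$ with $\xcs(\conv{\mathcal X}) \leq r$.

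Next, I would take logarithms. Using $\log \Delta = O(n \log n)$, $\log(1/\delta) = O(\log r + \log n)$, and assuming $r \leq 2^{n}$ (else the bound is trivial), the dominant contribution is
\[
\log_2 N(n,r) \;=\; O\!\left((n+r^2)^2 \cdot n \log n\right).
\]
For the counting argument to produce a set outside the representable family, it suffices that $\log_2 N(n,r) < 2^n$, equivalently $(n+r^2)^2 \, n \log n = O(2^n)$. Solving for $r$ yields the desired threshold $r = \Omega\!\bigl(2^{n/4}/(n\log n)^{1/4}\bigr)$.

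The only nontrivial step is getting the counting estimate right: one must be careful to count \emph{the pair} $(a_i,b_i,\bar U_i)$ over $n+r^2$ indices, and to verify that at $r = \Theta(2^{n/4}/(n\log n)^{1/4})$ the exponent $(n+r^2)^2 n\log n$ is indeed $o(2^n)$ rather than $\Theta(2^n)$ so that a strict inequality $N(n,r) < 2^{2^n}$ can be achieved; matching constants (and absorbing them into the $\Omega$) are routine. Contrapositively, there must exist an $\mathcal X$ with $\xcs(\conv{\mathcal X}) > r$, giving the theorem.
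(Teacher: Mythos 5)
Your proposal is correct and follows essentially the same route as the paper: the Rothvo\ss-style counting argument that uses the discretized description from Lemma~\ref{lem:roundingsdp} as an injective encoding of each $\mathcal X$, bounds the number of possible systems entrywise, and compares against $2^{2^n}$. The only cosmetic difference is that you argue contrapositively with a fixed bound $r$ while the paper works directly with $R=\max_{\mathcal X}\xcs(\conv{\mathcal X})$ (and explicitly pads systems to a uniform size, a detail worth stating since different $\mathcal X$ yield systems of different dimensions).
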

\begin{proof} 
Let \(R \coloneqq R(n) \coloneqq \max_{\mathcal X \subseteq \binSet^n}
\xcs(\conv{\mathcal X})\) and suppose that \(R(n) \leq 2^n\); otherwise the
statement is trivial. The construction of Lemma~\ref{lem:roundingsdp} induces
an injective map from \(\mathcal X \subseteq \binSet^n\) to systems \((a_i,
U_i,  b_i)_{i\in [n+r^2]}\) as the set \(\mathcal X\) can be reconstructed from
the system. Also, adding zero rows and columns to \(A, U\) and zero rows to
\(b\) does not affect this property. Thus without loss of generality we assume
that \(A\) is a \((n + R^2) \times n\) matrix, \(U\) is a \((n + R^2) \times
R^2\) matrix (using \(\frac{R(R+1)}{2} \leq R^2\)).  Furthermore, by
Lemma~\ref{lem:roundingsdp}, every value in \(U\) has absolute value at most
\(\Delta\) and can be chosen to be a multiple of \((16 R^3 (n +
R^2))^{-1}\Delta^{-1}\). Thus each entry can take at most $3(16 R^3 (n +
R^2))\Delta \cdot \Delta = \Delta^{2+o(1)}$ values, since $R \leq 2^n$ and
$\Delta \geq n^{n/2}$.  Furthermore, the entries of $A,b$ are integral and have
absolute value at most $\Delta$, and hence each entry can take at most $3\Delta
\leq \Delta^{2+o(1)}$ different values.
   
We shall now assume that $R \geq n$ (this will be justified by the lower bound on $R$ later).
By injectivity we cannot have more sets than distinct systems, i.e. 
\[2^{2^n} - 1 \leq \Delta^{(2+o(1))(n+R^2+1)(n+R^2)} = \Delta^{(2+o(1))R^4} = 2^{(2+o(1))n \log n R^4}\text{.}\]
Hence for $n$ large enough, \(R \geq  \frac{2^{n/4}}{(3n \log n)^{1/4}}\) as needed. 
\end{proof}

\section{On the semidefinite xc of polygons}
\label{sec:semid-xc-polyg}

In an analogous fashion to \cite{FioriniRothvossTiwary11} we can use a
slightly adapted version of Theorem~\ref{lem:roundingsdp} to show the
existence of a polygon with \(d\) integral vertices with semidefinite
extension complexity 
\(\Omega((\frac{d}{\log d})^{\frac{1}{4}})\). For
this we change Theorem~\ref{lem:roundingsdp} to work for arbitrary
polytopes with bounded vertex coordinates; the proof is almost
identical to Theorem~\ref{lem:roundingsdp} and follows with the analogous
changes as in \cite{FioriniRothvossTiwary11}. 

\begin{lemma}[Generalized rounding lemma] 
\label{lem:roundingsdpGen}
Let $n,N \geq 2$ be a positive integer and set $\Delta \coloneqq
((n+1)N)^{2n}$. Let $\mathcal V \subseteq\Z^{n} \cap [-N,N]^n$ be a nonempty and
convex independent set and \(\mathcal X \coloneqq \conv{\mathcal V}
\cap \Z^n\). With $r \coloneqq \xcs(\conv{\mathcal X})$ and  $\delta
\leq \big(16r^3(n+r^2)\big)^{-1}$,
for every $i\in [n+r^2]$ there exist:
\begin{enumerate}
\item an integer vector $a_i\in\Z^n$ such that $\|a_i\|_\infty \leq \Delta$,
\item an integer~$b_i$ such that~$|b_i|\leq \Delta$,
\item a matrix~$U_i\in\psd^r(\sqrt{r\Delta})$ whose entries are
  integer multiples of~$\delta/\Delta$ and have absolute value at
  most~$8r^{3/2}\Delta$, such that
\end{enumerate}
\beqn
\mathcal X = \Big\{x\in \Z^n\st\exists Y\in\psd^r(\sqrt{r\Delta}):\: \big|b_i -
a_i^{\trans}x - \langle Y, U_i\rangle\big| \leq \frac{1}{4(n+r^2)}\:\: \forall i\in [n+r^2]\Big\}.
\eeqn
\end{lemma}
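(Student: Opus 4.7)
The plan is to carry out the proof of Lemma~\ref{lem:roundingsdp} essentially verbatim, with two points of adaptation: (i) the initial coefficient bound on a non-redundant integer description of $\conv{\mathcal V}$, and (ii) the forward inclusion $\mathcal X \subseteq \bar{\mathcal X}$, which must now handle integer points of $\conv{\mathcal V}$ that are not vertices. Everything else---the Yannakakis factorization, the rescaling via Theorem~\ref{thm:psdRescale}, the choice of a volume-maximizing index set $I'$, the explicit discretization of each $U_i$, the Frobenius estimate $\fnorm{\bar U_i - U_i} \leq 4\delta r^2/\sqrt{\Delta}$, and the Cramer's-rule-based certification of infeasibility---is insensitive to whether vertices are $0/1$ or integer and transports without change.

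First I would replace the Ziegler $0/1$ coefficient bound with a Hadamard estimate: any integer polytope $\conv{\mathcal V}\subseteq[-N,N]^n$ admits a facet description $a_i^\trans x \leq b_i$ with $a_i\in\Z^n$, $b_i\in\Z$, and $\|a_i\|_\infty, |b_i| \leq \Delta = ((n+1)N)^{2n}$, since $(a_i, b_i)$ arises as an $(n+1)\times(n+1)$ determinant built from $n$ affinely independent integer vertices in $[-N,N]^n$ together with a row of ones. Since $\mathcal V$ is convex independent and integral, $\mathcal V$ is exactly the vertex set of $\conv{\mathcal X}=\conv{\mathcal V}$, so the slack matrix $S$ associated with this description and $\mathcal V$ has entries in $[0,\Delta]$. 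I then invoke Theorem~\ref{thm:yannaSDP} to obtain a rank-$r$ factorization $(U_i,V^j)$, Theorem~\ref{thm:psdRescale} to enforce $\opnorm{U_i}, \opnorm{V^j}\leq\sqrt{r\Delta}$, select $I'\subseteq I$ with $|I'| = \dim(\lspan{(a_i,U_i)\st i\in I}) \leq n+r^2$ of maximum parallelepiped volume, and define $\bar U_i$ by rounding the eigenvalues and eigenvector entries in a spectral decomposition of $U_i$ to the nearest multiple of $\delta/\Delta$. The Frobenius and entry-wise bounds from the proof of Lemma~\ref{lem:roundingsdp} carry over unchanged.

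Finally I verify $\bar{\mathcal X}=\mathcal X$. For $\mathcal X\subseteq\bar{\mathcal X}$, given $x\in\conv{\mathcal V}\cap\Z^n$, write $x=\sum_{j\in J}\mu_j v_j$ as a convex combination of vertices and set $Y=\sum_{j\in J}\mu_j V^j$; then $Y\in\psd^r$, $\opnorm{Y}\leq\max_j\opnorm{V^j}\leq\sqrt{r\Delta}$, and the factorization identity yields $a_i^\trans x + \langle U_i,Y\rangle = b_i$ exactly, so Cauchy--Schwarz with $\fnorm{Y}\leq r\sqrt{\Delta}$ bounds the residual after substituting $\bar U_i$ by $4\delta r^3 \leq 1/(4(n+r^2))$. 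For the reverse inclusion, if $x\in\Z^n\setminus\mathcal X$ then $x\notin\conv{\mathcal V}$, so some $a_{i^*}^\trans x > b_{i^*}$; integrality of $x$, $a_{i^*}$, and $b_{i^*}$ upgrades this to $a_{i^*}^\trans x\geq b_{i^*}+1$, and the Cramer's-rule bound $|\nu_i|\leq 1$ on the expansion $(a_{i^*},U_{i^*})=\sum_{i\in I'}\nu_i(a_i,U_i)$ produces an $i\in I'$ certifying $x\notin\bar{\mathcal X}$ exactly as in the proof of Lemma~\ref{lem:roundingsdp}. The only real obstacle is verifying that the enlarged $\Delta=((n+1)N)^{2n}$ still leaves enough slack for the $(16r^3(n+r^2))^{-1}$ rounding tolerance to absorb all errors; since $N$ enters only through $\Delta$ and every residual estimate depends on $\Delta$ in exactly the way required by the hypothesis on $\delta$, this is a bookkeeping check rather than a conceptual difficulty.
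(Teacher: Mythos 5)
Your proposal is correct and follows essentially the same route as the paper, which likewise obtains a bounded-coefficient integer description of $\conv{\mathcal V}$ (citing a standard determinant bound rather than deriving it via Hadamard) and then reruns the proof of Lemma~\ref{lem:roundingsdp} unchanged. You in fact spell out the one genuine adaptation the paper leaves implicit---taking $Y=\sum_j \mu_j V^j$ for non-vertex lattice points in the forward inclusion---and handle it correctly.
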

\begin{proof}
By, e.g., \cite[Lemma D.4.1]{hindry2000diophantine} it follows that
\(P\) has a non-redundant description with integral coefficients of
largest absolute value of at most \(((n+1)N)^{n}\). Thus the maximal entry
occurring in the slack matrix is \(((n+1)N)^{2n} = \Delta\). The proof
follows now with a similar argument as in Theorem~\ref{lem:roundingsdp}.
\end{proof}

We are ready to prove the existence of a polygon with \(d\) vertices, 
with integral coefficients, so that its semidefinite extension
complexity is \(\Omega((\frac{d}{\log d})^{\frac{1}{4}})\). 

\begin{theorem}[Integral polygon with high semidefinite xc]
  For every \(d \geq 3\), there exists a \(d\)-gon \(P\) with
  vertices in \([2d] \times [4d^2]\) and \(\xcs(P) = \Omega((\frac{d}{\log d})^{\frac{1}{4}})\).
\end{theorem}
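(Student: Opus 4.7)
The plan is to exhibit a family of $\binom{2d}{d}=2^{\Omega(d)}$ distinct $d$-gons with vertices in $[2d]\times[4d^2]$ and then apply Lemma~\ref{lem:roundingsdpGen} in a counting argument, as was done for $0/1$ polytopes in Theorem~\ref{thm:sdpNoEx}. A pigeonhole comparison between the size of the family and the number of possible rounded SDP systems will then force some member of the family to have $\xcs=\Omega((d/\log d)^{1/4})$.

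For the family, I would take the $2d$ parabola points $p_i=(i,i^2)$ for $i\in[2d]$, which lie in strictly convex position and belong to $[2d]\times[4d^2]\subseteq\Z^2$. For each $S\subseteq[2d]$ with $|S|=d$, set $\mathcal V_S=\{p_i\st i\in S\}$ and $P_S=\conv{\mathcal V_S}$; this gives a genuine $d$-gon whose vertex set is exactly $\mathcal V_S$. Letting $\mathcal X_S=P_S\cap\Z^2$, we have $\mathcal V_S\subseteq\mathcal X_S\subseteq P_S$ so $\conv{\mathcal X_S}=P_S$, and therefore $\mathcal V_S$ is recoverable from $\mathcal X_S$ as the vertex set of $\conv{\mathcal X_S}$. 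Consequently distinct $S$ yield distinct $\mathcal X_S$, so the family contains $\binom{2d}{d}=2^{\Omega(d)}$ distinct sets.

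Next I would apply Lemma~\ref{lem:roundingsdpGen} with $n=2$ and $N=4d^2$, yielding $\Delta=((n+1)N)^{2n}=(12d^2)^4=O(d^8)$. Define $R=R(d)\coloneqq\max_S\xcs(P_S)$; since any $d$-gon has $\xcs$ at most its linear extension complexity, $R\leq d$. For each $S$ with $r=\xcs(P_S)\leq R$, the lemma produces a system $(a_i,U_i,b_i)_{i\in[n+r^2]}$ encoding $\mathcal X_S$ injectively, with $a_i\in\Z^2\cap[-\Delta,\Delta]^2$, $|b_i|\leq\Delta$, and each entry of $U_i$ an integer multiple of $\delta/\Delta$ in $[-8r^{3/2}\Delta,8r^{3/2}\Delta]$. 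Padding with zero rows to a uniform row count of $n+R^2$ preserves injectivity. Each $a_i$ then takes $\Delta^{O(1)}$ values, each $b_i$ takes $\Delta^{O(1)}$ values, and each of the $R^2$ entries of $U_i$ takes at most $\Delta^{O(1)}$ values (using $R\leq d$ and $\delta^{-1}\leq\mathrm{poly}(R)$). Multiplying across all $n+R^2$ rows gives at most $\Delta^{O(R^4)}$ possible systems, hence
\[
2^{\Omega(d)}\leq\binom{2d}{d}\leq\Delta^{O(R^4)}=2^{O(R^4\log d)}.
\]
Rearranging yields $R=\Omega((d/\log d)^{1/4})$, so some $S$ achieves $\xcs(P_S)=\Omega((d/\log d)^{1/4})$ as required.

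The main obstacle is verifying the injectivity of the composite map $S\mapsto\mathcal X_S\mapsto(a_i,U_i,b_i)_{i\in[n+R^2]}$: the outer step relies on $\conv{\mathcal X_S}=\conv{\mathcal V_S}$, which lets us recover $\mathcal V_S$ from $\mathcal X_S$, while the inner step relies on the injectivity of the encoding guaranteed by Lemma~\ref{lem:roundingsdpGen} together with the observation that zero-padding to a common row count does not collapse systems. Everything else is a direct replay of the counting in Theorem~\ref{thm:sdpNoEx}, with the only quantitative changes being the substitutions $n=2$, $\Delta=((n+1)N)^{2n}$, and the replacement of $2^{2^n}$ by $\binom{2d}{d}$ as the size of the family.
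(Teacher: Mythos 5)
Your proposal is correct and follows essentially the same route as the paper: the same parabola family $\{(i,i^2)\st i\in[2d]\}$, the same injective encoding via Lemma~\ref{lem:roundingsdpGen} with $n=2$, $N=4d^2$, and the same counting/pigeonhole comparison yielding $R=\Omega((d/\log d)^{1/4})$. Your explicit justification of injectivity (recovering $\mathcal V_S$ as the vertex set of $\conv{\mathcal X_S}$) and the bound $R\leq d$ are details the paper leaves implicit, but the argument is the same.
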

\begin{proof} 
The proof is identical to the one is \cite{FioriniRothvossTiwary11}
except for adjusting parameters as follows. The set \(Z \coloneqq
\face{(z,z^2) \mid z \in [2d]}\) is convex independent, thus every
subset \(X \subseteq Z\) of size \(\card{X} =d\) yields a different
convex \(d\)-gon. Let \(R \coloneqq \max\face{\xcs{\conv{X} \mid X
    \subseteq Z, \card{X} = d}}\). 

As in the proof of Theorem \ref{thm:sdpNoEx}, we need to count the number of
systems (which the above set of polygons map to in an injective manner). Using
\(\Delta = (12d^2)^2\), \(n = 2\), \(N = 4d^2\) by
Lemma~\ref{lem:roundingsdpGen} it follows easily that each entry in the system
can take at most \(cd^{14}\) different values. Without loss of generality, by
padding with zeros, we assume that the system given by
Lemma~\ref{lem:roundingsdpGen} has the following dimensions: the \(A,b\) part
from (1.)~and (2.), where \(A\) is formed by the rows \(a_i\), is a \((3 + R^2)
\times 3\) matrix and \(U\) from (3.), formed by the \(U_i\) read as rows
vectors, is a \((3 + R^2) \times R^2\) matrix. We estimate \[2^{d} \leq
(cd^{14})^{(3+R^2)^2} \leq 2^{c' \cdot R^4 \cdot \log d}\] and hence \(R \geq
c' (\frac{d}{\log d})^{\frac{1}{4}}\) for some constant \(c' > 0\) follows. 
\end{proof}

\section{Final remarks}
\label{sec:conclusion}

Most of the questions and complexity theoretic considerations in
\cite{Rothvoss11} as well as the approximation theorem carry over
immediately to our setting and the proofs
follow similarly. For example, in analogy to \cite[Theorem 6]{Rothvoss11},
an approximation theorem for 0/1 polytopes can be derived showing that
every semidefinite extended formulation for a 0/1 polytope can be
approximated arbitrarily well by one with coefficients of bounded
size.

 \medskip The following important problems remain open:

\begin{problem}
  Does the CUT polytope have high semidefinite extension
  complexity. We highly suspect that the answer is in the affirmative,
similar to the linear case. However the partial slack matrix analyzed in
  \cite{extform4} to establish the lower bound for linear EFs has an
  efficient semidefinite factorization. In fact, it was precisely this
  fact that   established the separation between semidefinite EFs and
  linear EFs in \cite{bfps2012}. 
\end{problem}

\begin{problem}
  Is there an information theoretic framework for lower bounding
  semidefinite rank similar to the framework laid out in \cite{braverman2012information,BP2013commInfo} for nonnegative rank? 
\end{problem}

\begin{problem}
  As asked in \cite{FioriniRothvossTiwary11}, we can ask similarly
  for semidefinite EFs: is the provided lower bound for the
  semidefinite extension complexity of polygons tight? 
\end{problem}

\section*{Acknowledgements}
\label{sec:acknowledgements}
We are indebted to the anonymous referees for their remarks and the
shortening of the proof of Lemma~\ref{lem:derlemma}.

\bibliographystyle{abbrvnat}
\bibliography{bibliography}

\end{document}